\newlength{\defbaselineskip}
\newcommand{\setlinespacing}[1]%
           {\setlength{\baselineskip}{#1 \defbaselineskip}}
\newcommand{\setA}{\mathcal{A}}
\newcommand{\setS}{\mathcal{S}}
\newcommand{\setP}{\mathcal{P}}
\newcommand{\setR}{\mathcal{R}}
\newtheorem{Thm}{Theorem}
\newtheorem{mytheorem}{Theorem}
\newtheorem{Cor}{Corollary}%[chapter]
\newtheorem{Lem}{Lemma}[section]
\newtheorem{Alg}{Algorithm}[chapter]
\newtheorem{Fig}{Figure}[section]
\newtheorem{Exa}{Example}[section]
\newtheorem{Prop}{Proposition}[chapter]
\newtheorem{definition}{Definition}
\newtheorem{lemma}[Thm]{Lemma}
\newcommand{\bc}{\begin{center}}
\newcommand{\ec}{\end{center}}
\newcommand{\ben}{\begin{enumerate}}
\newcommand{\een}{\end{enumerate}}
\newcommand{\beq}{\begin{equation}}
\newcommand{\eeq}{\end{equation}}
\newcommand{\eit}{\end{itemize}}
\newcommand{\bit}{\begin{itemize}}
\newcommand{\bex}{\begin{Exa}}
\newcommand{\eex}{\end{Exa}}
\newcommand{\betab}{\begin{tab}}
\newcommand{\etab}{\end{tab}}
\newcommand{\efig}{\end{Fig}}
\newcommand{\befig}{\begin{Fig}}
\newcommand{\berem}{\begin{rem}}
\newcommand{\erem}{\end{rem}}
\newcommand{\balg}{\begin{Alg}}
\newcommand{\ealg}{\end{Alg}}
\newcommand{\bprop}{\begin{Prop}}
\newcommand{\eprop}{\end{Prop}}
\newcommand{\belem}{\begin{Lem}}
\newcommand{\elem}{\end{Lem}}
\newcommand{\bcor}{\begin{Cor}}
\newcommand{\ecor}{\end{Cor}}
\renewcommand{\epsilon}{\varepsilon}
\newcommand{\argmax}{\mathop{\rm argmax}}
\def\squarebox#1{\hbox to #1{\hfill\vbox to #1{\vfill}}}
\def\acro#1#2{\vskip4pt\hbox to\textwidth{\large
\hbox to5pc{#1\hfill}\vtop{\advance\hsize by
-5pc\raggedright\noindent#2}}}
\def\symbol#1#2{\vskip4pt\hbox to\textwidth{\large
\hbox to5pc{#1\hfill}\vtop{\advance\hsize by
-5pc\raggedright\noindent#2}}}
\begin{document}
%--------------no page numbering ------------------------------------------
\renewcommand{\thepage}{}

%-----------line spacing --------------------------------------------------
%    \mbox{}\\
%    \vspace{4cm}
%---------- title pages ----------------------------------------------------
\begin{center}

    \vspace{32.5 cm}
    \large {\textbf{\ \ }}\\
    \huge {\textbf{T-Plots: A Novel Approach to Network Design}}\\
    \vspace{3.5 cm}
    \vspace{6.5 cm}
    \LARGE{\textbf{ITAMAR COHEN}}\\
    \vspace{3 cm}
\end{center}\hspace{-.2cm}
\newpage
{\ \ }
\newpage

\begin{center}

    \large {\textbf{\ \ }}\\
    \huge {\textbf{T-Plots: A Novel Approach to Network Design}}\\
    \vspace{2.5 cm}
    \Large{RESEARCH THESIS}\\
    \vspace{2.5 cm}
    \Large{In Partial Fulfillment of The
Requirements for the Degree of Master of Science in Electrical Engineering}\\
    \vspace{3.3 cm}
    \LARGE{\textbf{ITAMAR COHEN}}\\
    \vspace{3.3 cm}
\end{center}\hspace{-.2cm}
\begin{center}
\mbox{Submitted to the Senate of the Technion - Israel Institute of
Technology}\\
Kislev, 5768 Nov 2007
\newpage

\mbox{}\\[2cm]
\end{center}
{\ \ }
\newpage

\begin{center}
\begin{minipage}[c]{12cm}
\begin{center}
\Large{\textbf{ The research thesis was done under the supervision
of Dr. Isaac Keslassy in the Faculty of Electrical Engineering.\\[1.5cm] %\vspace{1cm}
        I gratefully thank Dr. Isaac Keslassy for his dedicated supervision, for the continuous support and for the
        inspiring ideas.\\[0.5cm] %\vspace{1cm}
        I would like to thank also the computer networks lab team: Yoram Orchen, Yoram Yihyie and Hai Vortman, for the technical support.\\[5.5cm] %\vspace{1cm}
        The generous financial help of the Banin Fund and of Cisco Israel are gratefully acknowledged.\\[.5cm] %\vspace{1cm}
        }}
\newpage

\end{center}
\end{minipage}
\end{center}
%--------------- table of contents\ ----------------------------------------
% \setlinespacing{.7}
\newpage
{\ \ }
\newpage

\tableofcontents
\newpage
\listoffigures
\newpage

% \setlinespacing{1.2}

\begin{abstract}
\mbox{} \vspace{-1cm} \label{abstract}

It is accepted wisdom that changes in the traffic matrix entail
capacity over-provisioning, but there is no simple measure of just
how much over-provisioning can buy. In this Thesis, we aim to
provide the network designer with a simple view of the network
robustness to traffic matrix changes. We first present the Traffic
Load Distribution Plots, or T-Plots, a class of plots illustrating
the percentage of traffic matrices that can be serviced as a
function of the capacity over-provisioning. For instance, from a simple look at their T-Plots, network designers can guarantee that
their network services all admissible traffic matrices, or 99\% of
permutation traffic matrices, or all traffic matrices with
ingress/egress load at most half the maximum. We further show that, 
unfortunately, in the general case plotting T-Plots is \#P-Complete,
i.e., that it is impossible to plot a T-plot in a polynomial time by
the noon tools. However, we show that T-Plots can sometimes be
closely modeled as Gaussian, thus only using two values (mean and
variance) to quantify the robustness of a capacity allocation to
traffic matrix changes. We further utilize these Gaussian T-Plots to
provide a more robust capacity allocation. Finally, we demonstrate
the benefits of using T-Plots by showing results of extensive
Monte Carlo simulations in a real backbone network.

This Thesis was submitted in 2007. Since then, the results that appeared in it were applied in various networking environments. 
In this newer version, we revisit the results 13 years later and explain their relevance to state-of-the-art problems in network design.
\end{abstract}
%----------- start numbering ---------------------------------------------
       \renewcommand{\thepage}{\arabic{page}}
        \setcounter{page}{1}
%%------------------------------------------------------------------------
    \setlinespacing{1.2}
    \chapter{Introduction} \label{chap:intro}
\section {Overview}
Network design aims to i) guarantee high throughput and low delay
for current and future traffic demands, ii) minimize the amount of
over-provisioned capacity and iii) minimize the number of dynamic
routing changes, which cause undesired effects, such as out-of-order
arrival of packets and drastic changes in traffic flows.

Balancing between these three goals, which often contradict each
other, necessitates the usage of efficient, yet accurate,
measurement tools. However, the development of such tools is a hard
task, due to the frequent changes in traffic demands in
state-of-the-art networks: a routing algorithm, while optimal for a
typical traffic demand, might fare quite poorly as traffic
conditions change.

The goal of this Thesis is to provide a common practical framework
to evaluate and compare routing algorithms in the always-changing
traffic demands environment. In addition, once the routing algorithm
is chosen, the tools developed here enable a simple, yet efficient,
capacity allocation scheme for achieving high throughput without
wasting network resources.

The main contribution of this Thesis is the introduction of the {\em
T-Plots}, or Traffic Load Distribution Plots, a class of plots
illustrating the distribution of the load generated by the whole set
of possible future traffic demands (called{\em T-Set}). The network
operator first defines a routing algorithm, a link capacity
allocation, and a T-Set that reflects the changes in traffic demand.
Then, the operator can plot the load distribution and directly
evaluate the efficiency of a traffic engineering approach according
to some given metric (e.g., average load, worst-case load, or
99\%-cutoff load).

\begin{figure}
\centering
\includegraphics
% [bb=0 920 1200 50,width=8cm, height=6cm]
[width=8cm, height=6cm]
{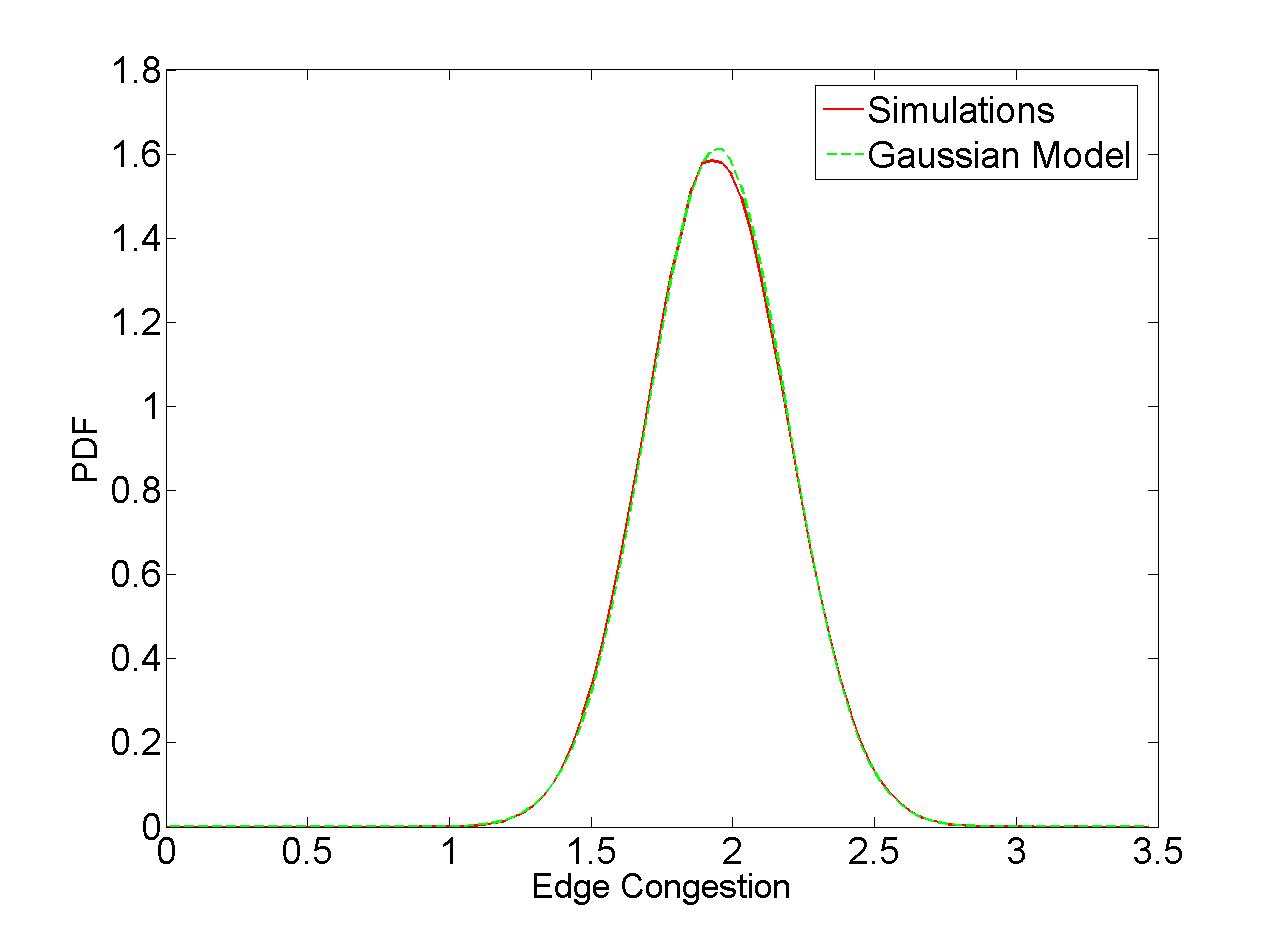}
\caption{Link load distribution on the (Kansas City, Indianapolis)
link in Abilene network, when the T-Set is $\setA$}
\label{fig:Abi_DSS_e13_PDF}
\end{figure}

Figure \ref{fig:Abi_DSS_e13_PDF} provides a T-Plot example. It shows
the distribution of the normalized load on the loaded link in
Abilene backbone network (details in Chapter
~\ref{Sec9:Simulations}) \cite{key-Abi}. The T-Set is assumed to be
the set $\setA$ of all admissible traffic matrices, (see Chapter
~\ref{Sec3:model}). In this T-Plot, the average load is about 1.9,
the 98\%-cutoff load is about 2.5, and the worst-case load can be
found to be exactly 5 (with a negligible density); in other words,
this T-Plot shows that when capacity equals half of the worst-case
load, 98\% of the matrices in the T-Set can already be serviced.
Thus, using this T-Plot, an operator can directly deduce the
performance of its traffic engineering algorithm and obtain clear
performance guarantees. The operator might decide, for instance,
that the marginal benefit of allocating more capacity beyond 2.5 is
not worth the cost. Incidentally, note that the T-Plot can be
closely modeled as Gaussian -- we will later develop on that point.

Depending on the problem faced, T-Plots can be considered in several
ways. They mainly provide a theoretical analysis tool and a unified
view of the different traffic engineering and capacity allocation
algorithms. But they also constitute a practical day-to-day
monitoring tool for network operators, who have a given topology and
a given routing algorithm, and simply want to monitor the influence
of traffic changes on their network performance. The way T-Plots are
used and their associated T-Sets highly depend on the reasons
underlying the changes in traffic demands.

In this Thesis, we set out to study the advantages and drawbacks of
T-Plots. We first demonstrate that the exact computation of T-Plots
is \#P-Complete, thus killing early hopes of easy results. Indeed,
computing exact T-Plots on networks of more than five nodes proves
extremely hard, if not impossible, when using standard tools like
Vinci~\cite{key-Vinci}. However, we propose a simple random-walk
algorithm that provides a close approximation to the exact T-Plot,
thus enabling us to study T-Plots in typical networks. Further, we
also show that T-Plots can sometimes be closely modeled as Gaussian
when using typical regular T-Sets. Therefore, the knowledge of two
single values (mean and variance) is enough to provide a model of
the whole load distribution and quantify the robustness of a
capacity allocation scheme to traffic matrix changes. Further, we
also determine exact simple bounds on the load distribution, thus
enabling operators to provide strict performance guarantees over any
T-Set. Further, we illustrate the possible use of T-Plots in
optimization schemes, by providing a toy model for a capacity
allocation scheme, in which the capacity provided to each link is
equal to the sum of its average load and a multiple of the load
standard deviation. Using a toy model of the Abilene backbone
network, we show that such a simple scheme is surprisingly close to
the performance bounds generated by the set of optimally robust
capacity allocations.

We would like to stress that in our view, the key aspect of this
work is the way it opens the road for future studies. Our results enable the
research community to compare and judge algorithms along the same
scale. Further, we show that there is more to look at in optimization algorithms than the average-case and the worst-case. Finally, we readily admit that it is yet unclear in what exact conditions T-Sets will yield Gaussian T-Plots; and, while we provide some intuition, we leave to future
studies this fundamental question on traffic engineering models.

Some of our results also appear in~\cite{Noc-Conf, Noc-Conf-TR, Noc-Journal}. These results were applied mainly in the context of Network-on-Chip, e.g., in~\cite{chudzikiewicz2013resources}. 
However, over the years similar ideas were extensively studied also in other networking environments. For instance, the emergence of data-center-networks in recent years has increased the interest in developing oblivious routing algorithms with congestion guarantees in such networks~\cite{Beyond-Disco-Balls}. Using a 
statistical approach for tackling uncertainty and heterogeneity in communication networks was studied in the context of buffering and scheduling~\cite{Qmist-Conf, Qmist-Journal}, scheduling~\cite{APSR-Infocom20-poster, APSR-IFIP}, and caching~\cite{Accs-Strategies-Conf, Accs-Strategies-ToN, CAB, FN-Aware-Conf}.

\section {Previous work} \label{sec:previous_work}
\subsection {Dynamic routing changes} \label{sec:dyn} Dynamic routing
algorithms try to perform routing changes which would minimize the probability that a link in the network would become overloaded
\cite{key-PATHNECK,key-MIRA}. However, performing such routing
changes requires full knowledge of the current and future traffic
demands, which are in practice not available \cite{CHANGING_TRAFFIC,
Kodialam, key-VLB, VLB-HOMO}. In addition, routing changes cause
undesired affects, such as out-of-order arrivals of packets and
drastic changes in traffic flows.

\subsection {Oblivious routing algorithms} \label{sec:obl}
%many recent works discuss routing algorithms, which don't employ
%dynamic routing changes (unless necessary, due to network failures)
Due to the practical problems when employing dynamic routing
algorithms, many recent works discuss the optimization of oblivious
routing algorithms, i.e. algorithms in which the routing between
each (source, sink) pair of nodes is determined in advance, and
doesn't change over time. Most works use the congestion in the
network as the optimization criterion, and prove theoretical bounds
on the ratio between the performances that an oblivious routing
algorithm can achieve and those of an optimal adaptive routing
algorithm ~\cite{BOUNDS, OPT-OBL-POLY, NEW-BOUNDS,
PRACTICAL-OPT-OBL-POLY, key-TP_centric}. Other objective functions
are the amount of capacity over-provisioning \cite{key-VLB,
key-TP_centric}, and the length of the routing paths
\cite{STRETCH_N_LB, key-TP_centric}. Other works generalize the
optimization criterion for a general function of the total flow on
each edge in the network, where this function obeys some
requirements (e.g. being concave) \cite{Obl-general,
key-Uncertain_TMs, Zhang1}.

A promising class of network architectures are those based on
Valiant Load Balancing (VLB) \cite{key-Val, key-VLB, Kodialam,
VLB-HOMO}. VLB guarantees a constant throughput, even at traffic
changes. The drawbacks of VLB schemes are that they typically entail
longer routing paths, and require larger network resources.
\cite{VLB_DIRECT} proposes an enhancement to VLB that somewhat
overcomes these drawbacks, but it assumes additional knowledge about
the current traffic in the network, thus hurting the simplicity and
easiness of implementation.

In practice, oblivious routing algorithms may also perform dynamic
routing changes due to failures of links or nodes in the network.
Some works aim to improve the resilience of network architecture,
that deploy oblivious routing algorithms. The objective is to
minimize the amount of over-provisioning used for guaranteeing
restoration from failures \cite{OBL_RESTORATION, PRE-PROVISIONING},
and to minimize the number of dynamic routing changes due to each
link failure \cite{key-COPING}.

Some works suggest hybrid algorithms, in which the baseline routing
is oblivious, but it may change over time, aiming to reduce the
congestion in the network \cite{key-GOAL} or the length of the
routing paths \cite{VLB_DIRECT}.

\subsection {Traffic matrices and T-Sets} \label{sec:T-Sets} A \emph{traffic matrix} is a
non-negative matrix, where the rows represent the sources, and the
columns represent the destinations (formal definitions of all the
terms used in the introduction appear in Chapter \ref{Sec3:model}).

When evaluating an oblivious routing algorithm, one should consider
its performances over the set of allowable traffic matrices (which
we call {\em T-Set}). Several T-Set models have been proposed in the
literature. The chosen T-Set is often the set $\setA$ of all
admissible traffic matrices, i.e. of row/column sums bounded by the
ingress/egress capacity, which are typically normalized to one
(homogeneous case), as popularized by the hose model
~\cite{Duffield,Kodialam,key-Typical_vs_WC, VPNS}. Some works also
focus on the set $\setP$ of permutation
matrices~\cite{key-TP_centric,key-O1TURN,key-GOAL}, since in the
homogeneous case the worst-case admissible traffic matrices are
permutations~\cite{key-WC_traffic}, and every admissible traffic
matrix can be represented as a linear combination of permutation
matrices ~\cite{key-Birkhoff,key-VN}. Other approaches rely on
historical traffic matrix values, for instance, based on one-hour
window observations~\cite{Agarwal, Duffield}, on critical matrices
among those observed~\cite{Zhang3}, on estimated traffic
matrices~\cite{Roughan}, or on the sum of a typical traffic matrix
and small fluctuations~\cite{CHANGING_TRAFFIC}.
~\cite{CHANGING_TRAFFIC} also suggests methods for modeling a
typical traffic matrix. However, the models proposed in
~\cite{CHANGING_TRAFFIC} implicitly assume that the entries in the
traffic matrix are independent. This assumption is used in
traditional network design schemes, which use the pipe model, but it
doesn't hold in networks that deploy the hose model \cite{VPNS,
Duffield}. Finally, one can refer to~\cite{Zhang2} as to the
tradeoffs involved in determining the size of the T-Set.

\subsection {T-Plots}
The reference that is most related to T-Plots in backbone networks
appears to be ~\cite{key-Typical_vs_WC}. \cite{key-Typical_vs_WC}
implicitly discusses the link between the CDF (Cumulative
Distribution Function) of the global congestion and the required
amount of overprovisioning in networks. In interconnection networks,
several studies have presented plots of the throughput PDF
(Probability Density
Function)~\cite{key-O1TURN,key-TP_centric,key-GOAL}, but mainly as a
mean for finding the average-case throughput and without real
interest in the whole distribution. In this Thesis, we extend these
results by modeling and bounding the whole distribution (and not
merely the average-case and worst-case) and showing its high
importance for analyzing the network performances and optimizing the
capacity allocation.

\section{Outline of the Thesis}
This Thesis is structured as follows. We formulate the T-Plot model
in Chapter ~\ref{Sec3:model}, and prove its \#P-completeness in
Chapter ~\ref{Sec4:sharp-P}. Then, in Chapters ~\ref{Sec5:gaussian}
and \ref{Sec6:guarantees}, we provide a Gaussian view of the edge
T-Plots, as well as strict performance, guarantees, and generalize
these results to global T-Plots in
Chapter~\ref{Sec7:Approximation-and-bounds}. Finally, in
Chapter~\ref{Sec8:Improving-the-capacity}, we introduce a simple
capacity allocation scheme, which we evaluate, together with the
other results, in Chapter~\ref{Sec9:Simulations}.

\chapter{T-Plot model}\label{Sec3:model}
We will now introduce the notations used to define T-Sets and
T-Plots.

\textbf{Network --} Consider a directed graph $G(V,E)$ with
\emph{n=$|V|$} nodes and $|E|$ edges. Node \emph{i} may initiate
traffic at a rate up to $r_{i}$, and receive traffic at a rate up to
$q_{i}$. For simplicity of presentation we assume that
$r_{i}=q_{i}=1$ for all \emph{i} (normalized homogeneous case), and
will later explain how to extend this simplistic model to the
general case. Each edge \emph{e} is allocated a positive capacity
$c(e)>0$. We will say that an edge $e$ is a \emph{strictly minimal
edge} if $c(e')>c(e)$ for each edge \emph{e'} different from $e$,
and a \emph{bridge} if removing $e$ would increase the number of
components in the graph.

\textbf{T-Set --} A \emph{traffic matrix} is an $n \times n$
non-negative matrix, where the entry in the location $(i,j)$
represents the amount of traffic from node $i$ to node $j$. We are
interested in analyzing the network performance when traffic
matrices belong to different Traffic Matrix Sets (T-Sets). For
instance, typical T-Sets include:
\begin{itemize}
\item $\setA (n)$, the set of all
the doubly-substochastic traffic matrices, i.e. the $n \times n$
non-negative matrices for which each row and column sum is at most
1:
\begin{equation}
\setA(n)=\left\{ D|\forall i,j: 0 \leq D_{ij} \leq 1,
\sum_{j}D_{ij}\leqslant 1,\sum_{j}D_{ji}\leqslant 1\right\}
\label{eq:setA}
\end{equation}

In this work we use $\setA(n)$ interchangeably with \emph {all
admissible traffic matrices}.
\item $\setS(n)$, the set of doubly-stochastic traffic matrices
(i.e. $n \times n$ non-negative matrices for which the row and
column sums equal exactly 1):
\begin{equation}
\setS(n)=\left\{ D|\forall i,j: 0 \leq D_{ij} \leq 1, \sum_{j}D_{ij}
= 1,\sum_{j}D_{ji} = 1\right\} \label{eq:setS}
\end{equation}

\item $\setP(n)$, the discrete set of $n!$ permutations (i.e. the set of
$n \times n$ 0-1 matrices with exactly single 1 in each row and
column):

\begin{equation}
\setP(n)=\left\{ D|\forall i,j: D_{ij}\in \{0, 1\}, \sum_{j}D_{ij} =
1,\sum_{j}D_{ji} = 1\right\} \label{eq:setS2}
\end{equation}

\end{itemize}

$\setA(n)$, which is defined by a set of linear inequalities, forms
a convex polytope, also known as {}``the routing polytope''
\cite{key-Uncertain_TMs}. $\setS(n)$, which is defined by a set of
linear equalities, forms the faces of the polytope $\setA(n)$ (it is
easy to think of it as the {\em surface} around $\setA(n)$).
$\setP(n)$ is the discrete set of the vertices of this polytope
\cite{key-vol_of_poly_of_SD}. The worst-case traffic matrices in
$\setA(n)$ are found within $\setP(n)$ ~\cite{key-WC_traffic}.

When there is no ambiguity, we shall use $\setA$, $\setS$ and $\setP$ instead of
$\setA(n)$, $\setS(n)$ and $\setP(n)$, respectively.

In the remainder, we will assume that traffic matrices are selected
u.a.r. (uniformly at random) from the T-Set, but this can of course
easily be extended to weighted T-Sets by using a weighted metric
over the T-Set metric space, so that the probabilities of choosing
each traffic matrix correspond to these weights.
%When there is no ambiguity, $\setA(n)$, $\setS(n)$ and $\setP(n)$
%will be denoted as $\setA$, $\setS$ and $\setP$ respectively.

\textbf{Routing --} A \emph{routing} is classically defined as a set
of $n^{2}|E|$ variables $\{f_{ij}(e)\}$, where $f_{ij}(e)$ denotes
the fraction of the traffic from node $i$ to node $j$ that is routed
through edge $e$. The routing is assumed to satisfy the classical
linear flow conservation constraints \cite{OPT-OBL-POLY}. Such
a routing is oblivious in the sense that the routing variables are
independent of the current traffic matrix. Note that the routing is
allowed to depend on both the source and the destination, as in
MPLS, and therefore subsumes routings that only depend on the
destination, as in many Interior Gateway
Protocols~\cite{Zhang1, Zhang2}.

\textbf{Congestion --} The total \emph{flow} crossing an edge $e\in
E$ when routing the traffic matrix $D$ is
$F(e,f,D)=\sum_{i,j}D_{ij}f_{ij}(e)$, where $D_{ij}$ is the
$(i,j)^\textrm{th}$ element of matrix $D$. Then, the \emph{edge
congestion} on edge $e$ is equal to the total flow crossing it
divided by the edge capacity, i.e.
\begin{equation} EC(e,f,D)=\frac{F(e,f,D)}{c(e)}=\frac{\sum_{i,j}D_{ij}f_{ij}(e)}{c(e)}\end{equation}
Thus, when the edge congestion on $e$ is at least 1, the flow
crossing $e$ is not below its capacity, and we will say that $e$ is
\emph{saturated}. Further, a network is saturated if at least one
edge in it is saturated. The \emph{global congestion} of routing $D$
using $f$ will be obtained by maximizing the edge congestion over
all the edges, that is:
\begin{equation} GC(f,D)=\max_{e\in E}\left\{ EC(e,f,D)\right\}
\label{eq:GC}\end{equation} In this thesis, congestion and load are
used interchangeably.

For a saturated network, the \emph{throughput} is defined as the
inverse of the global congestion, and is otherwise made not to
exceed 100\%:
\begin{equation} TP(f,D)=\textrm{min}\{
GC(f,D)^{-1},1\}\label{eq:TP}\end{equation}

\textbf{T-Plot --} T-Plots (the Traffic Load Distribution Plots) can
be classified into two categories: {\em edge T-Plots}, which show
the distribution of an edge congestion generated by traffic matrices
in the T-Set; and {\em global T-Plots}, which similarly show the
distribution of the global congestion generated by traffic matrices
in the T-Set. T-Plots can be represented in a CDF (Cumulative
Distribution Function) or PDF (Probability Distribution Function)
way. For example, the value of the \emph{edge T-plot CDF} at point
\emph{L} is the probability that the edge congestion imposed on that
edge by a traffic matrix selected u.a.r. from the T-Set would be at
most \emph{L}:
\begin{equation}
EC_{CDF}^{T}(e,f,L)=\Pr\left\{ EC(e,f,D)\leq L | D \in T\right\}
\label{eq:def_EC_CDF}\end{equation} Likewise, it it possible to use
the \emph{edge T-plot PDF}, which corresponds to the derivative of
the CDF whenever it exists. (For simplicity, we shall use the same
terms for both discrete and continuous T-Sets.)  The CDF and PDF of
the \emph{global congestion} are, of course, defined similarly.

Given an edge T-Plot, it is easy to find the \emph{worst-case edge
congestion}, as well as statistics such as the \emph{average-case
edge congestion} and the \emph{variance} of the edge congestion,
which are computed over the whole T-Set using a uniform measure on
that metric space.

\chapter{T-Plots are \#P-Complete}\label{Sec4:sharp-P}

Now that we have introduced T-Plots, we will prove that their
computation is \emph{\#P-C}. We will first show it for {\em edge
T-Plots}, which display the congestion distribution on a given edge,
and then as well for {\em global T-Plots}, which plot the
distribution of the global congestion. In both cases, we reduce the
problem of computing the permanent of a 0-1 matrix, which is known
for being \#P-complete, to a specific T-Plot computation problem
with a specific T-Set. Thus, by showing that a specific T-Plot
computation problem is \#P-complete, we demonstrate that the general
computation of an arbitrary T-Plot is \#P-complete as well.

\section {Edge T-Plots are \#P-Complete}\label{edge-sharp-p}
Before proving that the computation of edge T-Plots is \#P-Complete,
we need some preliminaries\cite{key-01Perm}.

\begin{definition}
Let $f$ be a function. We say that \emph{f}$\in$\emph{\#P} if there
exists a binary relation $R$ s.t.
\end{definition}
\begin{itemize}
\item If $(x,y)$$\in R$ then the length of $y$ is polynomial in the length
of $x$.
\item It can be verified in polynomial time that a pair $(x,y)$ is in $R$.
\item For every $x\in\Sigma^{*}$ (the set of all 0-1 strings), $f(x)=|\{ y:(x,y)\in R\}|$
\end{itemize}
\begin{definition}
Given two functions $f,g$
\end{definition}
\begin{itemize}
\item There is a \emph{polynomial Turing-reduction} from $g$
to $f$ (and denote $g\propto f$) if the function $g$ can be computed
in polynomial time using an oracle to $f$.
\item A function $f:\Sigma^{*}\rightarrow N$ is \emph{\#P-Hard} if for
every \emph{g}$\in$\emph{\#P} there is a polynomial reduction
$g\propto f$.
\item A function $f$ is \emph{\#P-Complete} if it is both \emph{\#P-Hard}
and in \emph{\#P}.
\end{itemize}
\begin{definition}
Given an $n \times n$ matrix $A$, the permanent of $A$ is defined as
\begin{equation}
Perm(A)=\sum_{\sigma\in
\setP(n)}\prod_{i=1}^{n}A_{i,\sigma(i)}\label{eq:perm-def}
\end{equation}
\end{definition}
We denote the problem of computing the permanent of a 01-matrix as
\emph{01-Perm}. It was shown in \cite{key-01Perm} that 01-Perm is
\#P-Complete.
\begin{lemma}
Let $G(V,E)$ be a connected graph. Let $e$ be a non-bridge edge.
Then, for each pair of nodes $(s,d)$, there exists at least one
walk, which uses \emph{e} exactly once, and at least one additional
walk, which doesn't use \emph{e}.
\end{lemma}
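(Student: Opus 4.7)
The plan is to exploit the non-bridge hypothesis to reduce the entire claim to connectivity of the graph $G' := (V, E \setminus \{e\})$. By the definition of a bridge stated in Chapter~\ref{Sec3:model}, if $e$ is not a bridge then removing $e$ does not increase the number of components; combined with the assumption that $G$ itself is connected, this means $G'$ is connected as well. Write $e = (u,v)$.

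First I would produce the walk that avoids $e$. For any pair of nodes $(s,d)$, the connectivity of $G'$ immediately yields an $s$-$d$ walk that uses only edges of $E \setminus \{e\}$; in particular this walk does not use $e$.

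Next, to build an $s$-$d$ walk that uses $e$ \emph{exactly once}, I would apply connectivity of $G'$ twice: choose an $s$-$u$ walk $W_1$ in $G'$ and a $v$-$d$ walk $W_2$ in $G'$, and concatenate $W_1 \cdot e \cdot W_2$. Because neither $W_1$ nor $W_2$ contains $e$ (they live in $G'$), and the middle segment traverses $e$ exactly once, the resulting $s$-$d$ walk uses $e$ exactly once, as required.

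There is essentially no obstacle here, and this is probably why the statement is phrased in terms of walks rather than paths: by allowing vertex and edge repetitions in $W_1$ and $W_2$ we avoid having to argue that the two sub-walks can be chosen internally disjoint, which would be false in general. The only mild subtlety, should it matter for later use of the lemma, is to make sure that in the directed setting the walks $W_1$ and $W_2$ respect edge orientations; this is already built into the assumption that $G$ (and hence $G'$) is connected in the directed sense needed for routing between all source-destination pairs, so no additional work is needed.
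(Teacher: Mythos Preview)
Your proof is correct and follows essentially the same approach as the paper: the paper also denotes $e=(a,b)$, uses the non-bridge hypothesis to get an $s$--$d$ walk avoiding $e$, and then builds the walk using $e$ exactly once by concatenating an $s$--$a$ walk in $G\setminus\{e\}$, the edge $e$, and a $b$--$d$ walk in $G\setminus\{e\}$. Your added remarks about walks versus paths and the directed setting are reasonable elaborations but go slightly beyond what the paper states.
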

\begin{proof}
Let's denote $e=(a,b)$. Since $e$ is not a bridge, for each (source,
destination) pair of nodes $(s,d)$, there exists a walk that doesn't
use $e$. Let's also construct a walk from $s$ to $d$ that uses $e$
exactly once. First, from $s$ to $a$, we use a walk that doesn't use
$e$ (such a walk exists, because $e$ is not a bridge) Then, from $a$
to $b$, we cross through $e$. Finally, from $b$ to $d$, we use
another walk that doesn't use $e$ (again, such a walk exists,
because $e$ is not a bridge).
\end{proof}

We will Theorem 1 by showing that the problem is \emph{\#P}
(Lemma 2), and then showing a polynomial-time reduction from the
problem of \emph{01-Perm} (Lemma 3).

Intuitively, given an $n \times n$ matrix \emph{A} and a non-bridge
edge \emph{e}, this reduction computes a routing algorithm \emph{f}
that routes packets from source \emph{s} to destination \emph{d} via
\emph{e} iff $A_{sd}=1$. We will show that as a consequence, the
value of the permanent of \emph{A} is equal to the number of
permutations, for which all the flow is routed via \emph{e}.
Therefore, using this equality, an oracle to a specific point of the
PDF of the congestion on \emph{e} suffices for calculating the
permanent of \emph{A}.
\begin{lemma}
\label{lem:Them1 is sharp-P}Let $G(V,E)$ be a directed graph, in
which the traffic is routed according to an oblivious routing
algorithm $f$. Let \emph{e} be a non-bridge edge. Then,
$EC_{PDF}^{\setP}(e,f,L)$ is in \emph{\#P}.
\end{lemma}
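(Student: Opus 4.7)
The plan is to directly verify the three defining conditions of \#P given at the start of this chapter. Writing $EC_{PDF}^{\setP}(e,f,L) = N(G,f,e,L)/n!$, where $N(G,f,e,L) = |\{D\in\setP(n) : EC(e,f,D)=L\}|$, I would argue that the counting function $N$ is the natural \#P-candidate and that the $1/n!$ normalization is a trivial, efficiently computable scaling (the accompanying theorem statement presumably uses ``\#P'' in this counting-function sense, since PDF values are rationals rather than integers).

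First I would fix the binary relation $R$ on which \#P-membership rests. Take $x$ to be a polynomial-size encoding of $(G,f,e,L)$ (the graph $G(V,E)$, the rational routing variables $\{f_{ij}(e')\}$, the identified edge $e$ with its capacity $c(e)$, and the rational threshold $L$), and take $y$ to be an encoding of a permutation matrix $D\in\setP(n)$, e.g.\ as the associated permutation of $\{1,\dots,n\}$. Define $(x,y)\in R$ iff $D$ is a legal permutation of size $n$ and $EC(e,f,D)=L$.

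Next I would check the three \#P conditions in order. (i) The certificate $y$ has length $O(n\log n)$, which is polynomial in $|x|$. (ii) Given $(x,y)$, verification amounts to confirming that $y$ describes a bijection on $\{1,\dots,n\}$ (an $O(n)$ check) and then computing
\[
EC(e,f,D)\;=\;\frac{1}{c(e)}\sum_{i,j}D_{ij}\,f_{ij}(e)\;=\;\frac{1}{c(e)}\sum_{i=1}^{n}f_{i,\sigma(i)}(e),
\]
which is a sum of $n$ rationals and a comparison with $L$, all in polynomial time. (iii) By construction, $N(G,f,e,L)=|\{y:(x,y)\in R\}|$, matching the third clause of the \#P definition. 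Hence $N\in\text{\#P}$, and the PDF value is obtained from $N$ by dividing by $n!$.

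The only ``obstacle'' I expect is really a matter of convention rather than mathematics: making explicit that the PDF, being a rational-valued quantity, is to be understood as its integer numerator $N$ (the denominator $n!$ being fixed by $n$). Note that the non-bridge assumption on $e$ is not used here at all; it is a hypothesis carried over from the theorem statement because it will be needed for the matching \#P-hardness reduction from $01$-Perm in the next lemma. This makes the present lemma the easy half of the \#P-completeness proof, and the serious work will come in the reduction of Lemma~3.
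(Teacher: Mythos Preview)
Your proposal is correct and follows essentially the same approach as the paper: define the relation $R$ by $((e,f,L),\sigma)\in R \Leftrightarrow EC(e,f,\sigma)=L$, verify the three \#P conditions directly, and note that the $n!$ normalization is immaterial. Your write-up is in fact a bit more careful than the paper's (you treat $L$ as a general rational input and explicitly remark that the non-bridge hypothesis is irrelevant here and only needed for the hardness reduction), but the argument is the same.
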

\begin{proof}
Let us denote $L=\frac{n}{c(e)}$. Let's define the binary relation
$R$ as follows: \begin{equation} ((e,f,L),\sigma)\in
R\Leftrightarrow EC(e,f,\sigma)=L\label{eq:def of L}\end{equation}
\begin{itemize}
\item The size of the representation of a permutation is polynomial in the
representation of $(e,f)$.
\item It can be verified in polynomial time whether $((e,f,L),\sigma)\in R$
by calculating $EC(e,f,\sigma)$.
\item As $EC_{PDF}^{\setP}(e,f,L)$ represents the probability of a randomly
chosen permutation to impose congestion \emph{L} on edge \emph{e},
and $|\setP(n)|=n!$, \begin{equation}
n!EC_{PDF}^{\setP}(e,f,L)=|\{\sigma\in
\setP(n):EC(e,f,\sigma)=L\}|\label{eq:n_PDF}\end{equation}
\end{itemize}
Note that the multiplication by $n!$ doesn't affect
$EC_{PDF}^{\setP}(e,f,L)$ being in \emph{\#P.}
\end{proof}
\begin{lemma}
\label{lem:PI_eq_EC}Let \emph{A} be an $n \times n$ 01-matrix. Let
$\sigma$ be a permutation of $\{1,2,...,n\}$. Then, it is possible
to construct in polynomial time a routing algorithm \emph{f} s.t.
\begin{equation}
\prod_{i=1}^{n}A_{i,\sigma(i)}=1\Leftrightarrow
EC(e,f,\sigma)=L\label{eq:lem 3}\end{equation}
\end{lemma}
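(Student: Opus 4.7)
The plan is to construct a routing $f$ from $A$ (and the graph $G$ with its distinguished non-bridge edge $e$) that concentrates flow on $e$ exactly according to the entries of $A$. Set $L = n/c(e)$, following the choice made in the proof of Lemma 2. For every ordered pair of nodes $(s,d)$, I would invoke Lemma 1 to produce either a walk that traverses $e$ exactly once (when $A_{sd}=1$) or a walk that avoids $e$ entirely (when $A_{sd}=0$). Concretely, when $A_{sd}=1$, I pick an $e$-avoiding walk from $s$ to the tail of $e$, then cross $e$, then take an $e$-avoiding walk from its head to $d$; each of these pieces can be produced in polynomial time via BFS/DFS on $G$ after removing $e$, which is still connected because $e$ is a non-bridge. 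Route all of the $(s,d)$-traffic along the chosen walk. This yields $f_{sd}(e) = A_{sd}$ for every pair, and the entire construction is polynomial in $n$ and $|E|$.

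Next I would compute the edge congestion induced by the permutation $\sigma$. Since a permutation matrix has a single unit entry in each row at column $\sigma(i)$, the total flow through $e$ satisfies
\begin{equation}
F(e,f,\sigma) \;=\; \sum_{i,j}\sigma_{ij}\, f_{ij}(e) \;=\; \sum_{i=1}^{n} f_{i,\sigma(i)}(e) \;=\; \sum_{i=1}^{n} A_{i,\sigma(i)}.
\end{equation}
Dividing by $c(e)$ gives $EC(e,f,\sigma) = \tfrac{1}{c(e)}\sum_i A_{i,\sigma(i)}$. Because each $A_{ij}\in\{0,1\}$, this sum equals $n$ if and only if every term $A_{i,\sigma(i)}$ is $1$, which in turn is equivalent to $\prod_{i=1}^{n} A_{i,\sigma(i)} = 1$. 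Hence $EC(e,f,\sigma) = L = n/c(e)$ precisely when $\prod_{i=1}^{n} A_{i,\sigma(i)} = 1$, giving the desired equivalence.

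The main obstacle I anticipate is checking that the routing $f$ built above is genuinely a valid oblivious routing meeting the flow conservation constraints, and that it really achieves $f_{sd}(e)=A_{sd}$ rather than some larger value. Because each $(s,d)$-pair sends its entire unit of flow along a single walk, conservation at every intermediate node is automatic; and Lemma 1 guarantees the walk chosen for $A_{sd}=1$ contains exactly one traversal of $e$, so the contribution to $f_{sd}(e)$ is exactly $1$ (not $2$ or more). A minor side issue is the diagonal case $\sigma(i)=i$: this is handled either by appealing to the fact that $01$-Perm remains \#P-complete when restricted to matrices with $A_{ii}=0$, or by augmenting $G$ so that self-traffic has a well-defined routing. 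Once these details are settled, combining this lemma with Lemma 2 and identity~(\ref{eq:n_PDF}) immediately turns any oracle for $EC_{PDF}^{\setP}(e,f,L)$ into a polynomial-time algorithm for $01$-Perm, completing the \#P-hardness reduction.
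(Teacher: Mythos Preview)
Your proposal is correct and follows essentially the same approach as the paper: construct $f$ so that $f_{sd}(e)=A_{sd}$ by routing each $(s,d)$-pair along a walk that uses $e$ exactly once or not at all (via Lemma~1), then observe that $EC(e,f,\sigma)=\tfrac{1}{c(e)}\sum_i A_{i,\sigma(i)}$ equals $L=n/c(e)$ iff the product is $1$. The only cosmetic difference is that the paper builds the $e$-avoiding walks by running Dijkstra $n$ times on $G'=(V,E\setminus\{e\})$ rather than BFS/DFS, and it does not single out the diagonal case (Lemma~1 already covers $s=d$, so your extra caution there is unnecessary).
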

\begin{proof}
We will construct an oblivious routing algorithm $f$ s.t.
\begin{equation} \forall i,j\in
V:f_{ij}(e)=A_{ij}\label{eq:Aij=f_ij}\end{equation}
 In other words, if $f_{sd}(e)=1$, $f$ will use \emph{e} exactly
once when routing packets from \emph{s} to \emph{d}. Otherwise,
\emph{f} will not use \emph{e} when routing packets from \emph{s} to
\emph{d}. By the construction:
% $ Change from v2.3:
% changed from A_{ij}(e) to A_{ij} at the first line of the eqnarray below.
\begin{eqnarray}
\prod_{i=1}^{n}A_{i,\sigma(i)}=1 & \Leftrightarrow & \sum_{ij}\sigma_{ij}A_{ij}=n\nonumber \\
 & \Leftrightarrow & \sum_{ij}\sigma_{ij}f_{ij}(e)=n\nonumber \\
 & \Leftrightarrow & EC(e,f,\sigma)=\frac{n}{c(e)}=L\label{eq:proof of PI <=> EC}\end{eqnarray}
If $A_{sd}=0$, \emph{f} routes the packets from $s$ to $d$ via a
walk, which does not use $e$. Else, $f$ routes the packets from $s$
to $d$ via a walk which uses \emph{e} exactly once, as described in
the proof of Lemma 1. \emph{f} can be calculated by running the
Dijkstra Algorithm {[}3] $n$ times on the graph $G'(V,E\backslash\{
e\})$, each time taking a different node in the graph as the source
node. Each edge in \emph{E\textbackslash{}\{e\}} has a unit weight.
The complexity of the construction is polynomial, as it requires
$O(|V|)$ runs of the Dijkstra Algorithm, where each run takes
polynomial time.
\end{proof}
\begin{mytheorem}
When the T-Set is the set of permutations $\setP$, finding the edge
T-Plot of a non-bridge edge \emph{e} is \emph{\#P-C}.
\end{mytheorem}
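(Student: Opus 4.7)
The plan is to combine the two preceding lemmas into a Turing reduction from \emph{01-Perm} to the edge T-Plot computation problem. Membership in \#P is immediate from Lemma~\ref{lem:Them1 is sharp-P}, so the only remaining task is to establish \#P-hardness, which will complete the \#P-Completeness claim.

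For hardness, I would start from an arbitrary instance $A$ of \emph{01-Perm}, i.e.\ an $n \times n$ 01-matrix whose permanent we wish to compute. I would then pick any fixed connected graph $G(V,E)$ on $n$ nodes that contains a non-bridge edge $e$ (for instance, a graph consisting of a Hamiltonian cycle plus a chord, with $e$ taken to be the chord; the construction is clearly polynomial in $n$). With this $G$ and $e$ in hand, I invoke Lemma~\ref{lem:PI_eq_EC} to build, in polynomial time, an oblivious routing $f$ satisfying
\begin{equation}
\prod_{i=1}^{n} A_{i,\sigma(i)} = 1 \;\Longleftrightarrow\; EC(e,f,\sigma) = L,
\end{equation}
where $L = n/c(e)$.

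The crux of the reduction is then a counting identity. Since $A$ is a 01-matrix, $\prod_i A_{i,\sigma(i)} \in \{0,1\}$ for every permutation $\sigma$, so by the definition of the permanent in (\ref{eq:perm-def}),
\begin{equation}
\mathrm{Perm}(A) = \bigl|\{\sigma \in \setP(n) : \textstyle\prod_i A_{i,\sigma(i)} = 1\}\bigr|
                 = \bigl|\{\sigma \in \setP(n) : EC(e,f,\sigma) = L\}\bigr|,
\end{equation}
where the second equality uses the equivalence above. Combining this with equation~(\ref{eq:n_PDF}) from the proof of Lemma~\ref{lem:Them1 is sharp-P} yields
\begin{equation}
\mathrm{Perm}(A) = n! \cdot EC_{PDF}^{\setP}(e,f,L).
\end{equation}
Thus a single query to an oracle for the edge T-Plot PDF at the point $L = n/c(e)$, followed by one multiplication by $n!$, returns $\mathrm{Perm}(A)$, giving a polynomial Turing-reduction $\text{01-Perm} \propto EC_{PDF}^{\setP}$.

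Since \emph{01-Perm} is \#P-Hard, so is computing $EC_{PDF}^{\setP}$ on non-bridge edges, and together with Lemma~\ref{lem:Them1 is sharp-P} this gives \#P-Completeness. The main subtlety, already absorbed by Lemma~3, is ensuring that the routing $f$ can realize arbitrary patterns $f_{ij}(e) \in \{0,1\}$ while still satisfying the flow conservation constraints; non-bridgeness of $e$ is exactly what makes both ``use $e$ once'' and ``avoid $e$'' walks available for every $(s,d)$ pair, and is therefore an unavoidable hypothesis of the theorem rather than a convenience.
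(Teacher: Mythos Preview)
Your proposal is correct and follows essentially the same approach as the paper: invoke Lemma~\ref{lem:Them1 is sharp-P} for membership in \#P, then use Lemma~\ref{lem:PI_eq_EC} together with (\ref{eq:n_PDF}) to obtain $\mathrm{Perm}(A)=n!\cdot EC_{PDF}^{\setP}(e,f,L)$ and conclude \#P-hardness via the reduction from \emph{01-Perm}. The only addition you make over the paper's proof is the explicit construction of a suitable host graph $G$ (cycle plus chord) for the reduction, which the paper leaves implicit; this is a helpful clarification but not a substantive departure.
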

\begin{proof}
Let \emph{A} be a $n \times n$ 01-matrix. By Lemma \ref{lem:Them1 is
sharp-P}, $EC_{PDF}^{\setP}(e,f,L)$ is \emph{\#P.} Successively
using Lemma \ref{lem:PI_eq_EC} and (\ref{eq:n_PDF}):\begin{eqnarray}
Perm(A) & = & |\{\sigma\in \setP(n):\prod_{i=1}^{n}A_{i,\sigma(i)}=1\}|\nonumber \\
 & = & |\{\sigma\in \setP(n): EC(e,f,\sigma)=L\}|\label{eq:Thm1_proof}\\
 & = & n!EC_{PDF}^{\setP}(e,f,L)\nonumber \end{eqnarray}
 This is a polynomial reduction from \emph{01Perm} to $EC_{PDF}^{\setP}$.
As \emph{01Perm} is \#P-Hard, $EC_{PDF}^{\setP}$ is \#P-C, where
even the task of computing the value of the distribution at an
arbitrary point is \#P-C. Consequently, $EC_{CDF}^{\setP}$ is \#P-C
as well.
\end{proof}
\begin{Cor}
In the general case, finding the edge T-Plot is \emph{\#P-C}.
\end{Cor}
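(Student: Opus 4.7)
The plan is to derive the corollary directly from Theorem~1 by a trivial instance-reduction, plus a brief extension of Lemma~\ref{lem:Them1 is sharp-P} to confirm \#P membership. The key observation is that the ``general case'' of computing edge T-Plots is defined over arbitrary choices of graph $G(V,E)$, routing $f$, edge $e$, and T-Set $T$. Theorem~1 already exhibits a family of inputs — namely those where $e$ is a non-bridge edge and $T=\setP$ — for which the problem is \#P-Hard. Thus the very same inputs constitute a witness that the general edge T-Plot problem is \#P-Hard.

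More concretely, I would first note that a hypothetical polynomial-time oracle $\mathcal{O}$ for the general edge T-Plot problem can be invoked with the particular inputs constructed in Lemmas~\ref{lem:Them1 is sharp-P} and~\ref{lem:PI_eq_EC}: given an arbitrary $n \times n$ 0--1 matrix $A$, build the routing $f$ satisfying $f_{ij}(e)=A_{ij}$ as in the proof of Lemma~\ref{lem:PI_eq_EC}, choose the non-bridge edge $e$, set $L=n/c(e)$, and query $\mathcal{O}$ for $EC_{PDF}^{\setP}(e,f,L)$. By equation (\ref{eq:Thm1_proof}) this returns $\mathrm{Perm}(A)/n!$, so $\mathcal{O}$ would yield a polynomial-time algorithm for 01-Perm. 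Hence no such $\mathcal{O}$ exists unless $\#P\subseteq FP$, establishing \#P-Hardness of the general problem.

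For \#P membership, I would essentially re-run the verification argument from Lemma~\ref{lem:Them1 is sharp-P} but with the witness being a traffic matrix $D\in T$ rather than a permutation $\sigma\in\setP$. Whenever the ambient T-Set $T$ admits succinct, polynomially-verifiable membership (as is the case for $\setA$, $\setS$, and $\setP$, each defined by polynomially many linear (in)equalities), the relation $R$ consisting of pairs $((e,f,L),D)$ with $D\in T$ and $EC(e,f,D)=L$ satisfies all three conditions of the definition of \#P: witnesses are polynomial in the input, membership in $R$ is checked in polynomial time by evaluating the linear form $EC(e,f,D)$, and the count of witnesses (appropriately normalized by $|T|$ in the discrete case) recovers the desired probability.

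The ``main obstacle'' here is really only bookkeeping rather than mathematical: one must be slightly careful that the notion of ``general case'' used in the corollary is the one that actually contains the Theorem~1 instance as a subproblem, and that the \#P-membership argument carries over to whatever class of T-Sets one has in mind. Since the entire point of Theorem~1 was to exhibit a specific hard instance, and since the corollary merely observes that this instance lives inside the broader problem, no new technical work is required beyond invoking Theorem~1 as a black box.
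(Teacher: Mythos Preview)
Your proposal is correct and follows essentially the same approach as the paper: the paper states the corollary without a separate proof, treating it as an immediate consequence of Theorem~1 (a specific hard instance lies inside the general problem, hence the general problem is \#P-Hard), and your write-up is simply the spelled-out version of that inference together with the natural extension of Lemma~\ref{lem:Them1 is sharp-P} for membership. If anything, you are more careful than the paper about the \#P-membership side, which the paper does not revisit for the general case.
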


\section {Global T-Plots are \#P-Complete}
We have just proved that finding the edge T-Plot is \emph{\#P-C}. We
shall now prove that finding the global T-Plot is \emph{\#P-C} as
well.

In a network with a strictly minimal edge $e_{m}$, the worst-case
congestion on $e_{m}$ is higher than the worst-case congestion on
any other edge in the network. Thus, $e_{m}$'s worst-case edge
congestion is equal to the network's \emph{global} congestion. Thus,
similar construction to that used in Section ~\ref{edge-sharp-p}
would suffice to prove that \emph{$GC_{CDF}^{\setP}$} is
\#\emph{P-C} as well.

\begin{mytheorem}
When the T-Set is the set of permutations $\setP$, finding the
global T-Plot of a graph that includes a strictly minimal edge is
\emph{\#P-C}.
\end{mytheorem}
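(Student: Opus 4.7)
The plan is to mirror the structure of the proof of Theorem 1, replacing the edge T-Plot with the global T-Plot by arranging the construction so that, at the value of $L$ we care about, the global congestion equals the edge congestion on the strictly minimal edge. Concretely, I would first establish that $GC_{PDF}^{\setP}$ lies in \#P by introducing the binary relation $((f,L),\sigma)\in R \Leftrightarrow GC(f,\sigma)=L$. Verifying this relation is polynomial-time since $GC(f,\sigma)$ is just the maximum of $n^2|E|$ summations, and since $|\setP(n)|=n!$ we get $n!\,GC_{PDF}^{\setP}(f,L)=|\{\sigma\in\setP(n):GC(f,\sigma)=L\}|$, exactly as in Lemma 2.

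Second, I would reduce 01-Perm to $GC_{PDF}^{\setP}$ by adapting the construction of Lemma 3. Given an $n\times n$ 01-matrix $A$, I build a connected graph $G(V,E)$ containing a designated non-bridge edge $e_m$, and I set the capacities so that $c(e_m) < 1/n$ while $c(e')=1$ for every other edge $e'\neq e_m$. This immediately makes $e_m$ a strictly minimal edge, meeting the hypothesis of the theorem. Using the Dijkstra-based construction of Lemma 3, I obtain an oblivious routing $f$ with $f_{ij}(e_m)=A_{ij}$ for every pair $(i,j)$, so that for any permutation $\sigma$, letting $k(\sigma)=\sum_{i,j}\sigma_{ij}A_{ij}\in\{0,1,\dots,n\}$, the flow across $e_m$ is exactly $k(\sigma)$ and $EC(e_m,f,\sigma)=k(\sigma)/c(e_m)$.

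Third, I would exploit the capacity gap to force the global congestion to be realized on $e_m$ at the target point. For any other edge $e'$, the total flow under a permutation matrix is bounded by $n$, hence $EC(e',f,\sigma)\le n$. Because $c(e_m)<1/n$, we have $1/c(e_m)>n$, so as soon as $k(\sigma)\ge 1$ we get $EC(e_m,f,\sigma)\ge 1/c(e_m) > n \ge EC(e',f,\sigma)$ for every $e'\neq e_m$, and therefore $GC(f,\sigma)=EC(e_m,f,\sigma)$. Setting $L=n/c(e_m)$, this gives $GC(f,\sigma)=L \Leftrightarrow k(\sigma)=n \Leftrightarrow \prod_{i=1}^{n}A_{i,\sigma(i)}=1$, so by the definition of the permanent and the \#P formula from the first step,
\begin{equation}
Perm(A) = |\{\sigma\in\setP(n): GC(f,\sigma)=L\}| = n!\,GC_{PDF}^{\setP}(f,L),
\end{equation}
which is a polynomial reduction from 01-Perm. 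Combined with the first step, this yields that $GC_{PDF}^{\setP}$ (and hence $GC_{CDF}^{\setP}$) is \#P-Complete even when restricted to graphs with a strictly minimal edge.

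The main obstacle is the third step: at other points of the distribution, permutations with $k(\sigma)=0$ might still create global congestion coming from edges other than $e_m$, so we cannot hope to equate the global T-Plot with the edge T-Plot on $e_m$ everywhere. The trick is that we only need the equality at the specific probe point $L=n/c(e_m)$, and making $c(e_m)$ strictly smaller than $1/n$ is precisely what guarantees that any permutation achieving this $L$ must do so through $e_m$, so that the counted set is exactly $\{\sigma:\prod_i A_{i,\sigma(i)}=1\}$.
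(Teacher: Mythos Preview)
Your proposal is correct and follows essentially the same approach as the paper: both arguments show membership in \#P via the analogue of Lemma~2, reuse the Lemma~3 routing construction on the strictly minimal edge $e_m$, and then use the capacity gap to argue that for every $e'\neq e_m$ one has $EC(e',f,\sigma)\le n/c(e') < n/c(e_m)=L$, so that $GC(f,\sigma)=L \Leftrightarrow EC(e_m,f,\sigma)=L$ and hence $Perm(A)=n!\,GC_{PDF}^{\setP}(f,L)$. The only cosmetic difference is that you explicitly manufacture the strictly minimal edge by fixing $c(e_m)<1/n$ and $c(e')=1$, whereas the paper simply invokes the hypothesis that such an edge exists; the underlying reduction is identical.
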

\begin{proof}
Let us denote: $L=\frac{n}{c(e_{m})}$. Let \emph{A} be a $n \times
n$ 01-matrix. $GC(f,L)\in$ \emph{\#P} by the same proof as that of
Lemma \ref{lem:Them1 is sharp-P}. By Lemma \ref{lem:PI_eq_EC}, there
exists a polynomial-time construction of a routing algorithm
\emph{f} s.t.
\begin{equation}
\prod_{i=1}^{n}A_{i,\sigma(i)}=1\Leftrightarrow
EC(e_{m},f,\sigma)=L\label{eq:App_B_1}\end{equation}
 As $e_{m}$ is strictly minimal, its worst-case congestion
is higher than the worst-case congestion on every other edge, and
therefore its worst-case edge congestion is equal to the network's
\emph{global} congestion. We finally get:
\begin{equation} \forall\sigma\in \setP(n),\forall e\in E,e\neq
e_{m}:EC(e,f,\sigma)\leq\frac{n}{c(e)}<L\label{eq:App_B_2}\end{equation}
 And thus $e_{m}$'s worst-case edge congestion is equal to the network's
\emph{global} congestion
\begin{equation}
GC(f,\sigma)=L\Leftrightarrow EC(e_{m},f,\sigma)=L\label{eq:GC <> EC
(them 2)}\end{equation} By combining Equations (\ref{eq:Thm1_proof})
and (\ref{eq:GC <> EC (them 2)}):
\[
Perm(A)=n!EC_{PDF}^{\setP}(e_{m},f,L)=n!GC_{PDF}^{\setP}(f,L)\] As
in the proof of Theorem 1, even the task of computing the value of
$GC_{PDF}^{\setP}$ at an arbitrary point is \#P-C. Consequently,
$GC_{CDF}^{\setP}$ is \#P-C as well.
\end{proof}

\begin{Cor}
In the general case, finding the global T-Plot is \emph{\#P-C}.
\end{Cor}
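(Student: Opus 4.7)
The plan is to derive Corollary 2 as a direct consequence of Theorem 2, paired with a \#P-membership argument that mirrors Lemma~\ref{lem:Them1 is sharp-P}. The strategy is the same one that yields Corollary 1 in the edge case: a specific instance of the problem has already been shown \#P-Complete, so the unrestricted problem is at least as hard, and a routine modification of the earlier membership proof gives the upper bound.

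First I would establish that the general global T-Plot problem lies in \#P. Adapting the template of Lemma~\ref{lem:Them1 is sharp-P}, for a routing $f$ and congestion level $L$ I define the witness relation $R'$ by $((f,L),\sigma)\in R' \Leftrightarrow GC(f,\sigma)=L$. A candidate permutation $\sigma$ has polynomial-size representation, and verifying the pair requires only computing $GC(f,\sigma)=\max_{e\in E}\{EC(e,f,\sigma)\}$, which amounts to at most $|E|$ summations of $n^{2}$ terms each. Counting such witnesses recovers $n!\cdot GC_{PDF}^{\setP}(f,L)$, and as in Lemma~\ref{lem:Them1 is sharp-P}, multiplication by $n!$ does not affect membership in \#P.

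Second, for \#P-Hardness in the general case, I would argue by subsumption. Any algorithm purporting to solve the general global T-Plot problem applies in particular to the restricted family of instances in which the T-Set is $\setP$ and the graph contains a strictly minimal edge. Theorem 2 has already shown this restricted problem to be \#P-Hard, via the reduction from 01-Perm that combines Lemma~\ref{lem:PI_eq_EC} with the identification of a strictly minimal edge's worst-case congestion with the network's global congestion. Hence a polynomial-time algorithm for the general problem would immediately yield a polynomial-time algorithm for 01-Perm, which is impossible unless P = \#P.

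I do not anticipate a substantive obstacle: the argument is a textbook reduction from a known-hard subproblem, combined with a transparent \#P-membership check. The only subtlety is how broadly to interpret "the general case"; I would read it as allowing arbitrary connected directed graphs, arbitrary oblivious routings, and arbitrary discrete T-Sets (continuous T-Sets like $\setA$ being accommodated by viewing the PDF as a limit of counting problems over suitably fine discretizations), which is the natural setting in which both steps above apply cleanly.
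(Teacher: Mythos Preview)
Your proposal is correct and follows exactly the approach the paper implies: the paper states Corollary~2 without proof, treating it as an immediate consequence of Theorem~2 in the same way Corollary~1 follows Theorem~1. Your subsumption argument for hardness (a solver for the general problem solves the restricted strictly-minimal-edge instance of Theorem~2) and your adaptation of Lemma~\ref{lem:Them1 is sharp-P} for \#P-membership simply make explicit what the paper leaves tacit.
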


We have proved that the exact computation of T-Plots is
\#P-Complete, regarding local congestion as well as global
congestion. This is reflected by the extreme complexity in computing
exact T-Plots for networks of reasonable size. For instance, given a
maximum algorithm running time of one day, we could not compute
exact T-Plots even on a simple 3-by-2 network when using
Vinci~\cite{key-Vinci}, a standard convex-polytope volume
computation tool. An underlying reason behind this complexity is
that computing the T-Plot is equivalent to multiple polytope
volume computations
--- this number even being infinite in the case of continuous T-Sets ---
and polytope volume computations are \emph{\#P-C} in the general
case~\cite{key-Vol_is_P-C}. Therefore, even getting exact values of
T-Plots at fixed bins proves extremely hard. In fact, as the space
dimension grows, \cite{Elekes} shows that we cannot hope for a good
approximation when using a deterministic algorithm unless we take
exponentially many points, and the best possible approximation of
any polynomial-time algorithm is at least exponential in the space
dimension. T-Plot computation is made even harder by the fact that
the number of dimensions in the volume computation is equal to the
number of flows, hence growing like $O(n^2)$ and not like $O(n)$.
Consequently, there is not much to be done to make exact computation
easier. Since exact T-Plot computation proves elusive, we can only
try to approximate or bound it. This will be a recurring theme in
this Thesis.

\chapter{Gaussian model}\label{Sec5:gaussian}
We just proved that in the general case, it is \emph{\#P-C} to find
the edge T-Plots. Therefore, we will strive to look for a good
approximation. We will now show that edge T-Plots can sometimes be
closely modeled as Gaussian, so that it suffices to compute two
single values (average and variance) to approximate them. We will
first present how the average and the variance of the edge
congestion can be found, and then study the Gaussian approximation
based on these two values.

\section{Finding the Gaussian
parameters}\label{SubSec:avg-variance}

Our objective is to compute the Gaussian parameters of the T-Plot of
a given edge over a given T-Set. Later in this chapter, we will show
that the general computation of these parameters can often be done
by pre-computing once several T-Set-dependent values (e.g., average
value for flow $(i,j)$), and then plugging in those fixed values
each time simulation parameters are changed (e.g., a new routing
function). Moreover, in several regular T-Sets, this computation is
much simplified. We will present below the results of this
computation when the chosen T-Set is the set of permutations
$\setP$. In that case, the average-case edge congestion on edge $e$
using routing $f$ is:
\begin{eqnarray}
EC_{ac}^{\setP}(e,f) & = & \frac{1}{n!}\sum_{\sigma\in \setP}EC(e,f,\sigma)\nonumber \\
 & = & \frac{1}{n!}\sum_{\sigma\in \setP}  \frac{\sum_{ij}\sigma_{ij}f_{ij}(e)}{c(e)}\nonumber \\
 & = & \frac{1}{n!c(e)}\sum_{ij}f_{ij}(e)\sum_{\sigma\in \setP}\sigma_{ij}\nonumber \\
 & = & \frac{1}{nc(e)}\sum_{ij}f_{ij}(e),\label{eq:perm avg}\end{eqnarray}
where the last equality relies on the fact that a given flow $(i,j)$
is only used in ${\frac{1}{n}}^\textrm{th}$ of the permutations.
Likewise, the variance is calculated using the formula
\begin{equation}
 Var_{\sigma \in
\setP}[EC(e,f,\sigma)]=E[EC(e,f,\sigma)^{2}]-E^{2}[EC(e,f,\sigma)],\label{eq:form_for_var}\end{equation}
with
\begin{equation}
E[EC(e,f,\sigma)^{2}] =
\frac{1}{n!c(e)^2}\sum_{ijkl}f_{ij}(e)f_{kl}(e) \left(
\sum_{\sigma\in \setP}\sigma_{ij}\sigma_{kl} \right)
\label{eq:perms_var-EC},\end{equation} where the expectations are
with respect to the random variable $\sigma \in \setP$.
%
%\begin{eqnarray}
%E_{\setP}[L^{2}] & = & \frac{1}{n!}\sum_{\sigma\in \setP}[EC(e,f,\sigma)]^{2}\nonumber \\
%% & = & \frac{1}{n!c(e)}\sum_{\sigma\in \setP}\sum_{ijkl}\sigma_{ij}\sigma_{kl}f_{ij}(e)f_{kl}(e)\nonumber \\
% & = & \frac{1}{n!c(e)}\sum_{ijkl}f_{ij}(e)f_{kl}(e) \left( \sum_{\sigma\in \setP}\sigma_{ij}\sigma_{kl} \right) \label{eq:perms_var-EC}\end{eqnarray}
By basic combinatorial considerations:
\begin{equation}
\sum_{\sigma\in \setP}\sigma_{ij}\sigma_{kl}=\left\{
\begin{array}{cc}
(n-1)! & i=k\wedge j=l\\
(n-2)! & i\neq k\wedge j\neq l \\
0 & (i=k\wedge j\neq l)\vee(i\neq k\wedge j=l)
\end{array}\right.\label{eq:acg P cases}\end{equation}
Therefore, it is possible to plug in these fixed values and directly
compute the exact mean and variance for any edge $e$, capacity
allocation $c(e)$ and routing algorithm $f$.

\section{Generalization to different T-Sets}\label{SubSec:avg-variance-more}
We will present below methods for computing the average and the
variance of the load for various T-Sets.

\subsection{Continuous T-Sets}\label{AppendixC}
We are now interested in demonstrating how the Gaussian parameters
can be computed in typical continuous T-Sets. We will provide
examples when the T-Sets are $\setS$ or $\setA$, using the same
techniques as in Section ~\ref{SubSec:avg-variance}.

First, the average-case edge congestion is the average congestion
caused by a matrix $D$ selected u.a.r. in the T-Set $T$:
\begin{equation}
EC_{ac}^{T}(e,f)=
%E_{D \in T}[EC(e,f,D)] =
\frac{\int_{T}EC(e,f,D)ds}{\int_{T}1ds},\label{eq:EC_ac_DS_in_app}\end{equation}
with the simplified formula:

\begin{eqnarray}
 & \int_{T}EC(e,f,D)ds & =\frac {1}{c(e)}\int_{T}\sum_{ij}\left[D{}_{ij}f_{ij}(e)\right]ds\nonumber \\
 &  & =\frac {1}{c(e)}\sum_{ij}f_{ij}(e)\int_{T}D_{ij}ds\label{eq:faces-ac2}
 \end{eqnarray}
\noindent \begin{flushleft}
\par\end{flushleft} \noindent

In the T-Set $\setS$, the value of $\int_{T}D_{ij}ds$ is independent
of \emph{i,j}. There exist $n^{2}$ different pairs of $(i,j)$ and
$\sum_{ij}\int_{\setS}D{}_{ij}ds = \int_{\setS}\sum_{ij}D_{ij}ds
=\int_{\setS}nds = n\int_{\setS}1ds.$ Hence $
\int_{\setS}D_{ij}ds=\frac{1}{n}\int_{\setS}1ds$, and the average
congestion is finally simply
$EC_{ac}^{T}(e,f)=\frac{1}{nc(e)}\sum_{ij}f_{ij}(e)$.

When the T-Set is $\setA$, it is possible to predetermine the value
of $\int_{T}D_{ij}ds$ by Monte Carlo simulations with an arbitrary
small error, as explained in Chapter ~\ref{Sec9:Simulations}. As
this value depends only on \emph{n} (and not on the topology or the
routing algorithm), it is sufficient to precompute at most one such
simulation for each qualified network.

The variance of the edge congestion is calculated as in Equations
(\ref{eq:form_for_var}) and (\ref{eq:perms_var-EC}):
\begin{equation}
E_{T}[EC(e,f,D)]=\frac{1}{nc(e)^2}
\sum_{ijkl}f_{ij}(e)f_{kl}(e)\int_{T}D_{ij}D_{kl}ds\label{eq:faces_EL2}\end{equation}
By separating the mutual relations of \emph{i,j,k,l} to the same 3
cases as in (\ref{eq:acg P cases}), it is possible to predetermine
the value of $\int_{T}D_{ij}D_{kl}ds$ by Monte Carlo simulations -
again, using at most one simulation for each qualified network.

\subsection{Traffic
matrices with diagonal 0}\label{AppendixD} For the sake of
completeness, please note that it is possible to add the assumption
that a node cannot send traffic to itself, i.e. the entries of the
traffic matrix diagonal will be 0. (Some works in the literature
assume it \cite{key-Typical_vs_WC,OPT-OBL-POLY}, while others do not
\cite{key-WC_traffic,key-O1TURN,key-TP_centric,key-GOAL}.) We denote
by $\setP_{d}$, $\setS_{d}$ and $\setA{}_{d}$ the 0-diagonal subsets
of $\setP$, $\setS$ and $\setA$, respectively.

The \emph{average edge congestion} when T-Set is $\setP_{d}$ is

\begin{equation}
EC_{ac}^{\setP_{d}}(e,f)=\frac{1}{(n-1)c(e)}\sum_{ij}f_{ij}(e)\label{eq:AC_Pd}\end{equation}
For calculating the variance of the edge congestion we will use
again Equation (\ref{eq:form_for_var}), where the value of
$E_{\setP_{d}}[L^{2}]$ is computed using Equation
(\ref{eq:perms_var-EC}):

\begin{equation}
E_{\setP_{d}}[L^{2}]=\frac{n}{|\setP_{d}(n)|*c(e)}\sum_{ijkl}f_{ij}(e)f_{kl}(e)\sum_{\sigma\in
\setP_{d}}\sigma_{ij}\sigma_{kl}\label{eq:E_l2_derang}\end{equation}
 By basic combinatorial considerations:

\[
\frac{n}{|\setP_{d}(n)|}\sum_{\sigma\in
\setP_{d}}\sigma_{ij}\sigma_{kl}=\]

\begin{equation}
\left\{ \begin{array}{cc}
\frac{1}{n-1} & i=k\wedge j=l\\
0 & (i=k\wedge j\neq l)\vee(i\neq k\wedge j=l)\\
\frac{1}{(n-1)(n-2)} & i\neq k\wedge j\neq
l\end{array}\right.\label{eq:E_ld2_derang_cases}\end{equation}
 When the T-Set is $\setS_{d}$ or $\setA_{d}$, the average and the variance
of the edge congestion may be calculated using Equations
(\ref{eq:form_for_var}) and (\ref{eq:faces-ac2}). Note that

\[
\int_{\setS_{d}}EC(e,f,D)ds=\frac{1}{n-1}\sum_{ij}f_{ij}(e)\]
 The rest of the integrals may be calculated by Monte Carlo simulations,
as detailed in Section ~\ref{AppendixC}.

\subsection{Heterogeneous T-Sets}\label{AppendixE}
In our model, node \emph{i} may initiate
traffic at rate up to $r_{i}$, and receive traffic at rate up to
$q_{i}$. At {\em heterogeneous} traffic models, there exists $i$
s.t. $r_{i}\neq q_{i}$, or there exist $i,j$ s.t. $r_{i}\neq r_{j}$
or $q_{i}\neq q_{j}$.

A general heterogenous T-Set $H$ is a convex polytope:

\begin{equation}
H(n)=\left\{ D|\forall i:D_{ij}\geq 0, \sum_{j}D_{ij}\leqslant
r_{i},\sum_{j}D_{ji}\leqslant q_{i}\right\}
\label{eq:H}\end{equation} Note that the \emph{surface} of this
polytope creates an additional T-Set, in which the inequalities in
(\ref{eq:H}) are replaced by equalities.

Let $D$ be a traffic matrix in a heterogeneous T-Set $H$. The
average and the variance of the edge congestion in $D$ may be
calculated by methods similar to those detailed in Section
~\ref{AppendixC}. Note that as the values of the entries in $D$ are
not identically distributed anymore, using Equation
(\ref{eq:EC_ac_DS_in_app}) requires at the worst-case calculating
the integral $\int_{T}D_{ij}ds$ separately for each pair $(i,j)$,
i.e. $n^{2}$ times. Using Equation (\ref{eq:faces_EL2}) requires
calculating up to $n^{4}$ integrals - one for each chosen 4 indices
$(i,j,k,l)\in[1,n]^{4}$. However, recall that it is sufficient to
calculate each of these integrals at most once for each qualified
network.

\section{Gaussian model example}\label{Sec:Gaussian_Model}

In order to study the Gaussian model, we now present two typical
examples of T-Plots of the normalized link congestion on our
simplified model of the Abilene backbone network (details in
Chapter~\ref{Sec9:Simulations}). The T-Set is assumed to be the set
$\setA$ of all admissible traffic demand matrices.

Consider Figure~\ref{fig:Abi_DSS_e13_PDF}, already mentioned in the
Introduction. Figure~\ref{fig:Abi_DSS_e13_PDF} shows the PDF of the
edge congestion on the most-loaded Abilene edge, $e_{13}$, which
corresponds to the (Kansas City, Indianapolis) directed edge (by
symmetry, there is the same T-Plot for the reverse direction of the
link). As can be seen, the fit of the Gaussian model is excellent.
(As explained in Sections ~\ref{AppendixC} and
~\ref{sub:T-Set-representation}, the Gaussian plot is generated
using an independent random-walk sampling algorithm to find the mean
and variance; running more samples or taking the mean and variance
from the simulations plot would simply make the two plots
undistinguishable).

\begin{figure}
\centering
\includegraphics
% [bb=0 920 1200 50,width=8cm, height=6cm]
% [width=8cm, height=6cm]
[width = 0.66\columnwidth]
{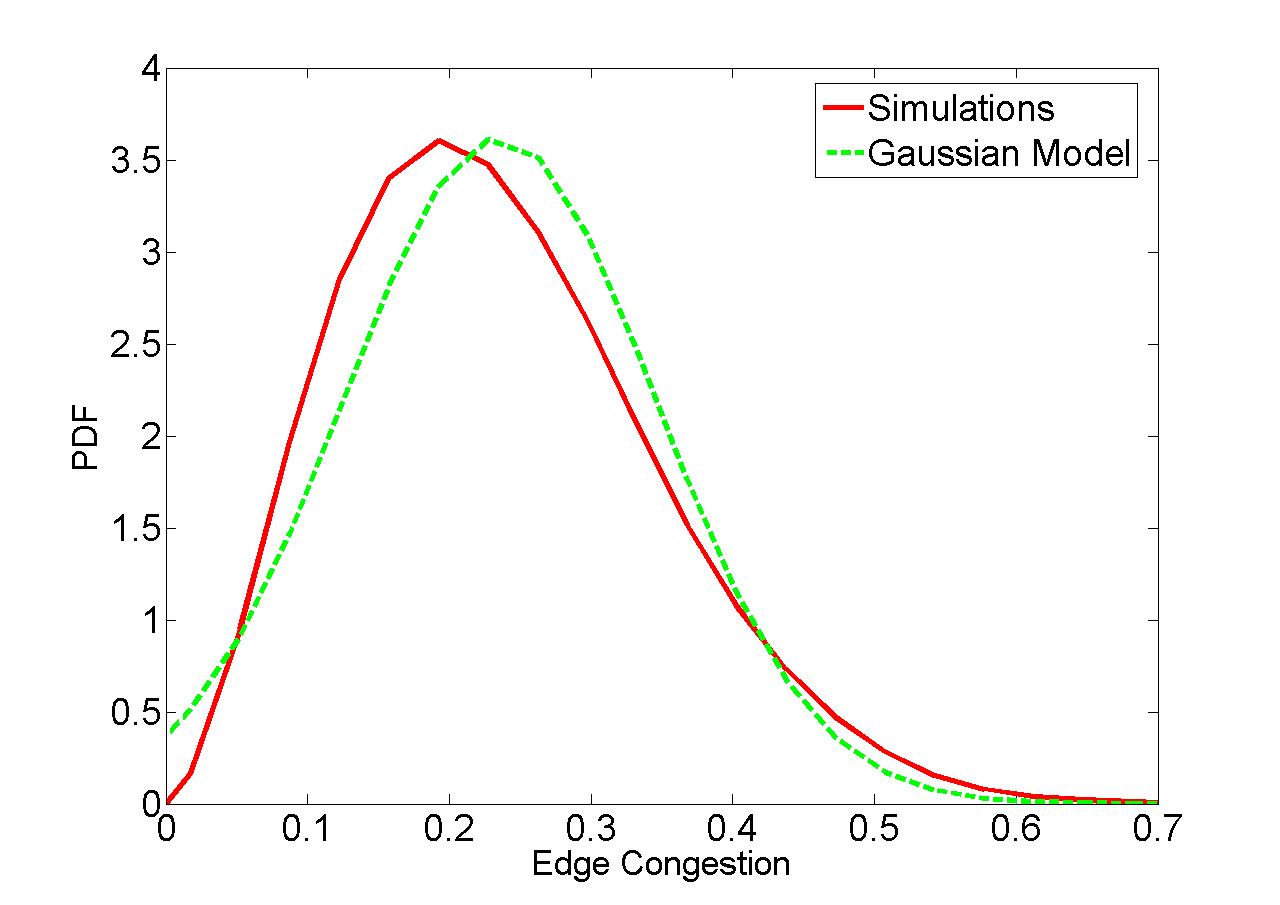}
\caption{Link load distribution on the (Seattle, Sunnyvale) link in
Abilene network, when the T-Set is $\setA$}
\label{fig:Abi_vol_e1_PDF}
\end{figure}

Now consider Figure \ref{fig:Abi_vol_e1_PDF}, a plot of the edge
congestion PDF on the {\em least-loaded} Abilene edge $e_1$,
corresponding to the (Seattle, Sunnyvale) link. Here the Gaussian
model fares poorer. Note that it would also not work much
better if we were to compensate for the hidden negative congestion
values in the left of the Gaussian distribution by multiplying the
positive right side of the PDF by some constant. This example shows
that there is simply no guarantee that the Gaussian approximation
will always work.

\begin{figure}
	\centering
    \subfigure[edge $e_{13}$] {
        \includegraphics
        [width = 0.66\columnwidth]
        % [bb=0 920 1200 50,width=7cm, height=5cm]
        {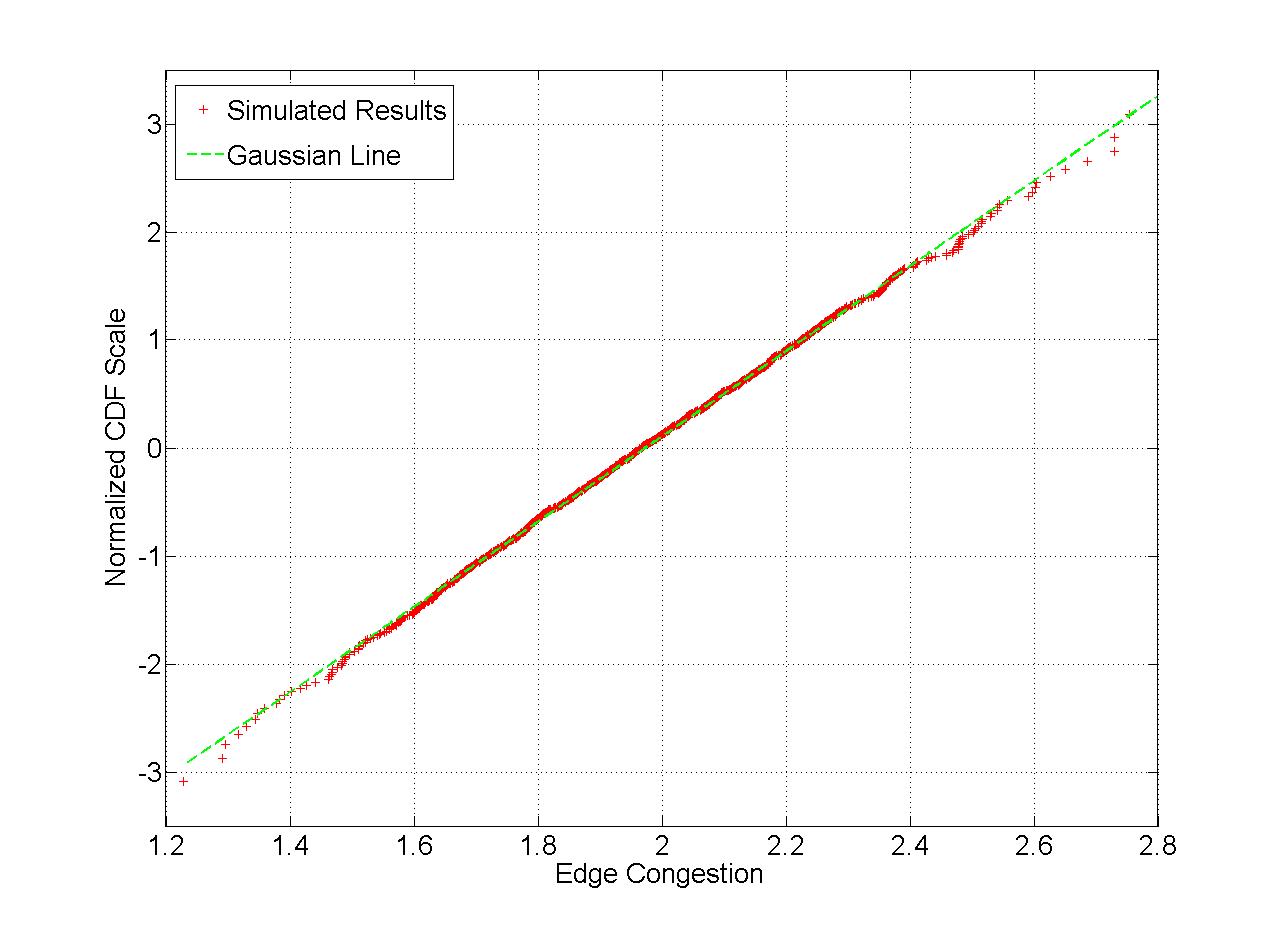}
        \label{fig:Abi_vol_e13_norm_test}}
    \subfigure[edge $e_1$]{
        \includegraphics
        [width = 0.66\columnwidth]
        % [bb=0 920 1200 50,width=7cm, height=5cm]
        {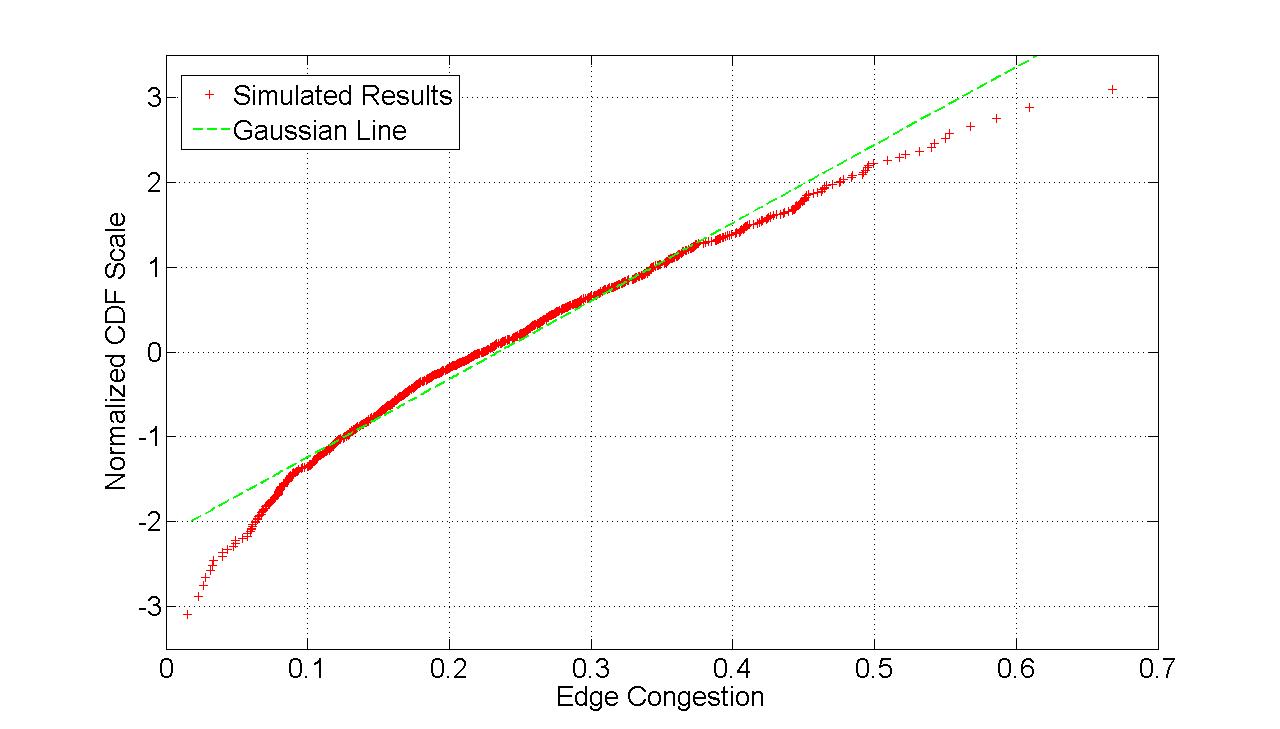}
        \label{fig:Abi_vol_e1_norm_test}} 

    \caption{Normal Probability Plots of the congestion on two edges in Abilene network (the T-Set is $\setA$) 
    \label{fig:Abi_vol_norm_test}}
\end{figure}

To check that the graphs indeed exhibit different behaviors, we
tested both plots for normality by using two different methods.
First, Figure \ref{fig:Abi_vol_norm_test} shows their corresponding
Normal Probability Plots (NPPs). NPPs are a graphical technique for
assessing whether or not a dataset is approximately normally
distributed~\cite{Chambers}. These NPPs were obtained by running
1,000 samples, each time measuring the induced edge congestion, and
then displaying the CDF of the sampled dataset, where the y-axis is
scaled so that a normally distributed dataset would yield an
approximate line. Note that the simulated results of the highly
loaded edge $e_{13}$ are very close to the Gaussian line, while
those of $e_{1}$ get rather far from it, which would seem to confirm
that the load on $e_{13}$ follows a near-Gaussian distribution while
the load on $e_{1}$ does not.

This is further confirmed by the Lilliefors test for goodness of fit
to a Gaussian distribution~\cite{key-Lillie}. The Lilliefors test
accepts the hypothesis of $e_{13}$ being a Gaussian (the test is
even significant at the stringent 1\% level). On the contrary, it
rejects the hypothesis of $e_{1}$ being a Gaussian (even at the
loose 20\% significance level). Therefore, both tests confirm that
one edge T-Plot displays a Gaussian behavior, while the other does
not.

The next Section explains the origin of the different behaviors of
the edge T-Plots of $e_{13}$ and $e_1$.

\section{To be or not to be (Gaussian)}
While it is clear that both T-Plots display different Gaussian
behavior, it is yet unclear why this is the case. In the remainder,
we will provide some limited intuition on the factors that might
make a distribution Gaussian or not --- but stress that we do not
provide any simple characterization, which is left for future
studies. For simplicity, we will also assume that we have
single-path routing.

\subsection{When flows are i.i.d.}
Assume for a moment that flows
$(i,j)$ are i.i.d. (independent and identically distributed). In
other words, when choosing a traffic matrix $D$ uniformly at random
from the T-Set, the distributions of $D_{ij}$ and $D_{kl}$ are
i.i.d. whenever the two flows $(i,j)$ and $(k,l)$ are distinct. This
might happen, for instance, if the T-Set is defined as the set of all
random matrices $D$ such that for all $i,j$, $0 \leq D_{ij} \leq
\frac{1}{n}$. Then, by the central limit theorem, when plotting the
T-Plots on links with an increasing number of flows crossing them,
the load distribution will become increasingly close to a Gaussian
distribution. This is because the load on a link $e$ equals
$\sum_{i,j} D_{ij} f_{ij}(e)$. Thus, if there are $k$ flows going
through a link, the load can be modeled as the sum of $k$ i.i.d.
random variables, which becomes increasingly Gaussian as $k$ grows,
with a difference bounded by Berry-Esseen--type
bounds~\cite{Esseen}. \\
Note that in our typical T-Sets $\setA$, $\setS$ and $\setP$, while
distinct flows are identically distributed, they are certainly not
independent, since there are additional constraints on the row and
column sums of the flow values.

Please recall that we assumed a single-path routing. In case of
multi-path routing, the mutual dependencies between the various
flows crossing each edge become more complex. It is hard to
determine whether switching from single-path routing to multi-path
routing would either increase or decrease the "entropy", i.e. the
rate of independence between different flows crossing the same edge.

\subsection{A counter-example}\label{counter-example}
Although the above explanation
provided some intuition on the Gaussian behavior, we will now show
that bad cases might still easily happen, even when $n$ grows to
infinity. Assume that in a given network, node $1$ sends traffic to
$p \cdot n$ destination nodes by using some unit-capacity edge $e$,
and does not use $e$ to send traffic to the other nodes, with $0 < p
< 1$ and $p \cdot n$ an integer number. Further assume that no other
flow uses $e$, and that the
T-Set is $\setP$, the set of permutation traffic matrices. \\

The matrix $f(e)$, which determines which (source, destination)
pairs route via $e$, contains $p \cdot n$ ones in the first row and
zeros elsewhere. For instance, if $p \cdot n = 3$ then $f(e)$ is:

\begin{equation}
f(e)=
\begin{pmatrix} 1&1&1&0&0&\dots&0\\
0&0&0&0&0&\dots&0\\
\hdotsfor[2]{4}\\
0&0&0&0&0&\dots&0\\
\end{pmatrix}
\label{Eq:f(e)}
\end{equation}

A fraction $p$ of the permutation matrices will generate a load of 1
on $e$, and a fraction $1-p$ a load of 0. The T-Plot PDF will be
equal to $1$ w.p. $p$, and $0$ otherwise, as shown in Figure
\ref{fig:Node alone} when $p=0.1$. The Gaussian model will be
generated by a Gaussian random variable of mean $p$ and variance
$p(1-p)$. As Figure ~\ref{fig:Node alone} shows, the Gaussian model
is clearly far from the real PDF, and normalizing to have the
modeled values at $0$ and $1$ add up to $1$ clearly does
not help.\\
Since this example is valid for all $n$, it teaches us that even when
the number of nodes and the number of flows passing through a link
go to infinity, the T-Plot can still {\em not} converge to a
Gaussian distribution. Therefore, much caution must be used. In this
case, it seems that this counter-example is possible because of the
high dependence between the flows that go through the edge $e$ - dependence
which is reflected by the fact that all the non-zero values in $f(e)$ belong
to the same row. Note that this dependence does not decrease with $n$.

\begin{figure}
\centering
\includegraphics
% [bb=0 920 1200 50,width=8cm, height=6cm]
[width = 0.66\columnwidth]
{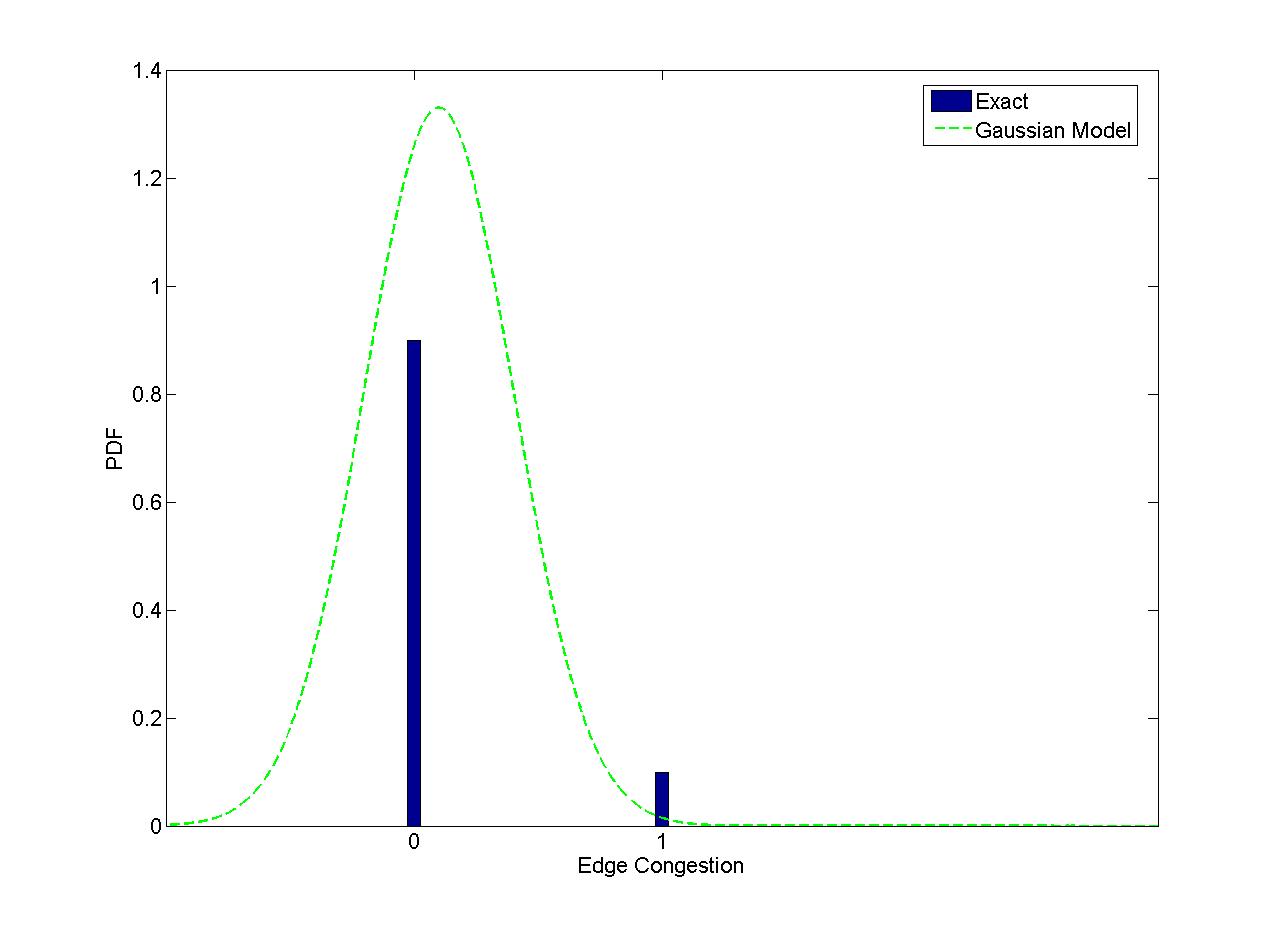}
\caption{PDF of the congestion for the scenario detailed in Sec. \ref{counter-example}
(the T-Set is $\setP$)} \label{fig:Node alone}
\end{figure}

\subsection {When the edge is overloaded}
The counter-example given in Section \ref{counter-example}
represents a rather "synthetic" scenario, where an edge is used only
for carrying some of the traffic of a single node. In real networks,
the overloaded edges tend to be located close to the "center", thus
carrying flows of many different nodes. As a consequence, the
distribution of the non-zero entries in $f(e)$ would be closer to
uniform, and the mutual dependencies between the various flows
crossing edge $e$ would decrease. This may give some intuition why
the T-Plot of edge $e_{13}$ (Figure ~\ref{fig:Abi_vol_e13_PDF_X}),
which is a "central" overloaded edge in Abilene network tends to be
much better modeled as Gaussian than the T-Plot of edge $e_1$
(Figure~\ref{fig:Abi_vol_e1_PDF}), which is a rather "peripheral"
under-loaded edge (see description of Abilene network architecture
in Chapter \ref{experimental_setup}).

\chapter{Congestion guarantees\label{sec:Congestion-guarantees}}\label{Sec6:guarantees}

\begin{figure}
    \centering 
    \subfigure[Congestion PDF] {
    \includegraphics
    % [bb=0 920 1200 50,width=5cm, height=5cm]
    [width = 0.66\columnwidth]
    {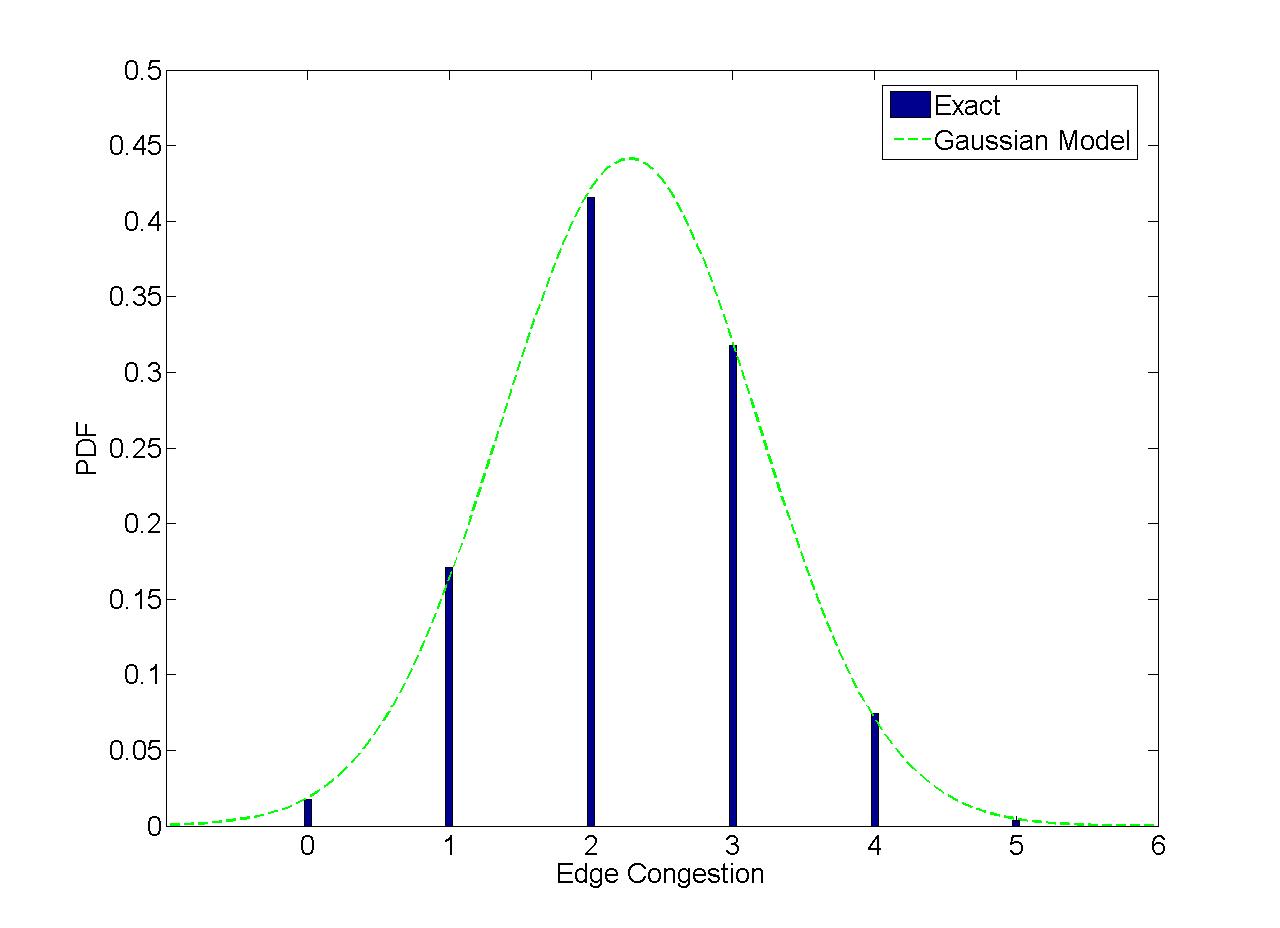}
    \label{fig:Abi_perms_e13_PDF}} 
    
    \subfigure[Congestion CDF]{
    \includegraphics
    % [bb=0 920 1200 50,width=5cm, height=5cm]
    [width = 0.66\columnwidth]
    {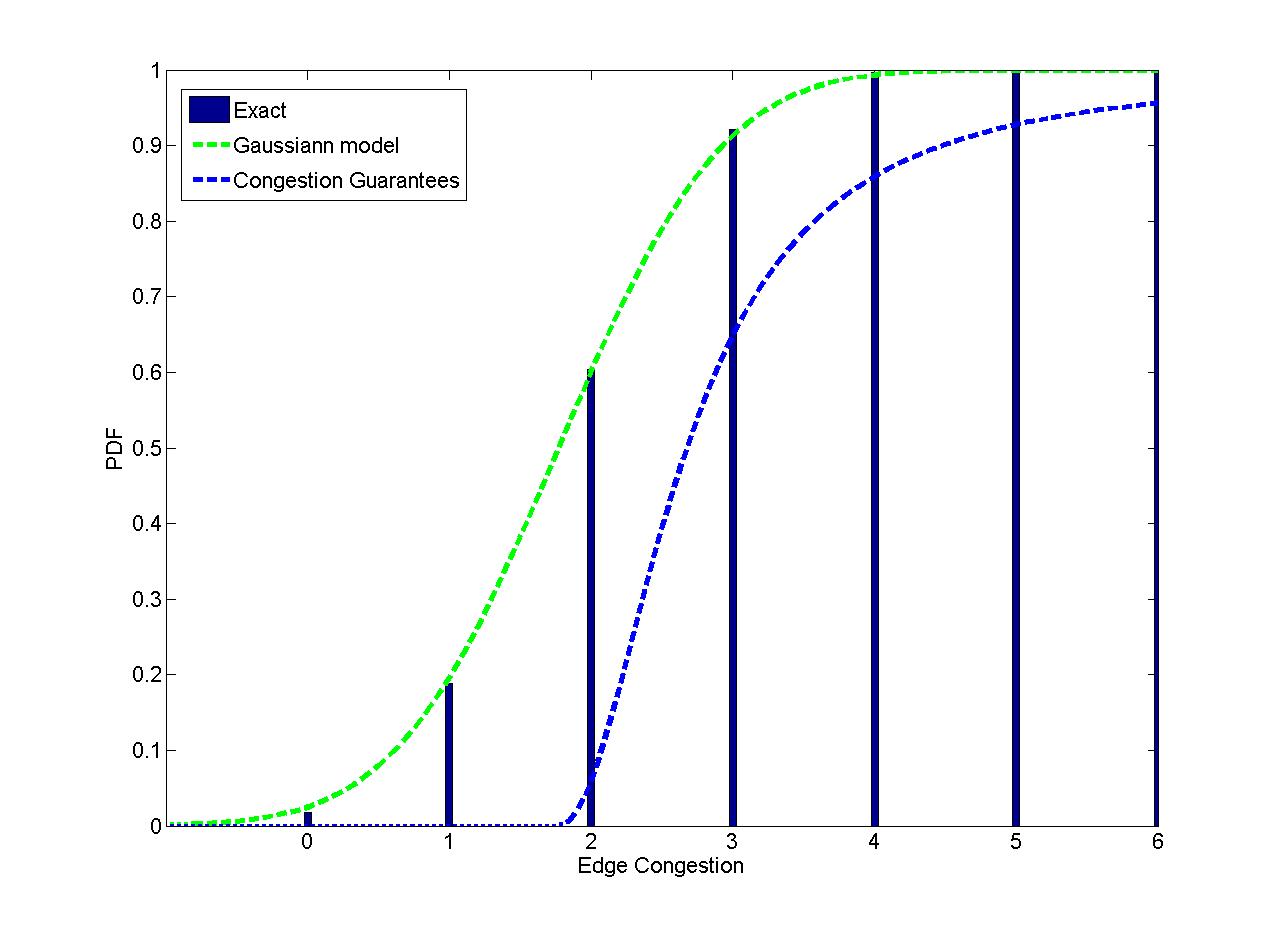}
    \label{fig:Abi_perms_e13_CDF}} 
    \label{fig_sim}
    \caption{Three views of the same
    T-Plot (edge 13, the T-Set is $\setP$)}
\end{figure}

Since it is not guaranteed that Gaussian approximations will be
correct, we are interested in providing performance bounds that are
guaranteed independently of the shape of the T-Plot. Further, we
would like to do so when only using the Gaussian parameters (mean
and variance). We will now show that it is indeed possible to use
these parameters to provide a bound on the probability that a given
edge would achieve less than 100\% throughput.

 Once we computed the average and the variance of the edge congestion, it is possible to bind the probability that a given edge would achieve less than 100\% throughput.
This technique is useful both for dynamically
identifying of bottlenecks \cite{key-PATHNECK}, and for online
routing algorithms, which try to minimize the interference between
possible future demands, such as MIRA \cite{key-MIRA}.

\textbf{Performance bounds --} Let us denote the average edge
congestion by $\mu$, and its standard deviation by $\sigma$. Let
\emph{X} denote the congestion imposed on a given edge \emph{e} by a
traffic matrix \emph{D} generated u.a.r. on the T-Set. Then, by
Chebyshev's one-tailed inequality with $k\geq 0$,

\emph{\begin{equation} Pr(X\geq
k\sigma+\mu)\leq\frac{1}{^{1+k^{2}}}\label{eq:Cheby 1
tail}\end{equation} }By definition, \emph{e} is saturated iff $X\geq
c(e)$. Therefore, the probability for \emph{e} to be saturated is
upper-bounded:

\begin{equation}
Pr(X\geq
c(e))\leq\frac{1}{^{1+\left[\frac{c(e)-\mu}{\sigma}\right]^{2}}}\label{eq:Cheby
1 tailed modified}\end{equation}
 Alternatively, given a desired certainty level \emph{G}, it is possible
to calculate a capacity \emph{c'(e)} that guarantees that at least a
fraction \emph{G} of the allowable traffic matrices would be served
without saturating \emph{e}. Transforming Equation (\ref{eq:Cheby 1
tailed modified}), we get:

\begin{equation}
c'(e)=\mu+\sigma\sqrt{\frac{G}{1-G}}\label{eq:single edge
CO}\end{equation}

\begin{figure}
    \centering 
    \subfigure[Throughput CCDF] {
    \includegraphics
    % [bb=0 920 1200 50,width=5cm, height=5cm]
    [width = 0.66\columnwidth]
    {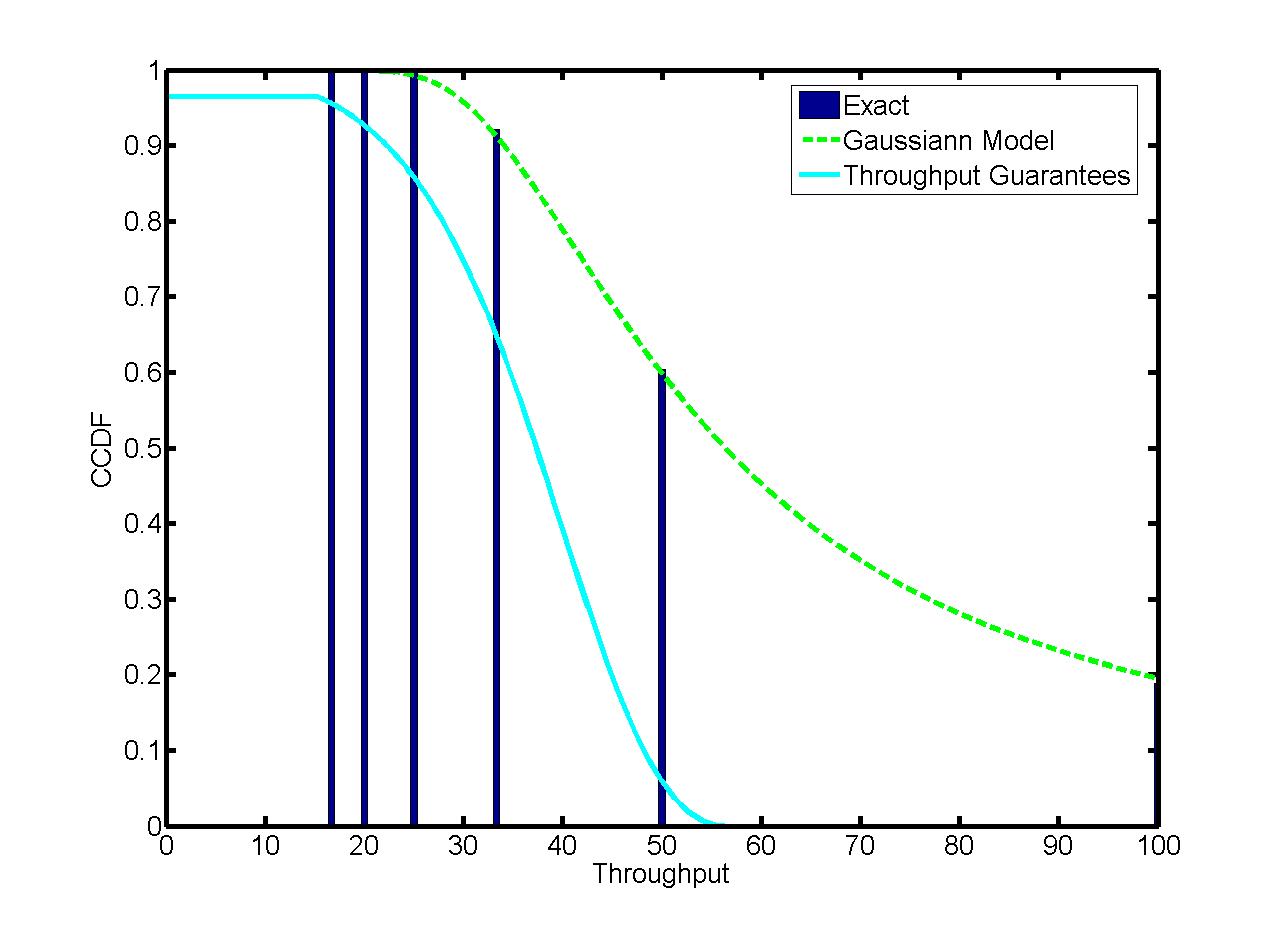}
    \label{fig:Abi_e13_TP_CCDF}} 
    \caption{Three views of the same
    T-Plot (edge 13, the T-Set is $\setP$, T-Plot of the throughput).}
\end{figure}

\textbf{Example --} Figure~\ref{fig_sim} provides an example of such
a bound. The T-Set is now assumed to be $\setP$, the set of
permutation traffic matrices, and the congestion is measured over
highly loaded edge $e_{13}$. Note that Figure~\ref{fig:Abi_perms_e13_CDF} provides
three different views of {\em the same} T-Plot, thus illustrating
how a network operator might choose different T-Plots depending on
the most significant parameters. \\
Figure~\ref{fig:Abi_perms_e13_PDF} displays the discrete PDF of the
edge congestion together with the Gaussian approximation, which is
again extremely close to the exact values (for once, these values
are exact, because all the $n!$=$11!$
permutation matrices were used in the sample). \\
The same T-Plot is also shown in a CDF form in
Figure~\ref{fig:Abi_perms_e13_CDF}, together with the new
performance bound (which didn't make sense to display on a PDF). For
each performance guarantee level, the performance bound now
guarantees a
sufficient corresponding amount of over-capacity. \\
Finally, the same T-Plot is also shown as the CCDF (Complementary
CDF) of the throughput in Figure~\ref{fig:Abi_e13_TP_CCDF}. For
instance, while some 92\% of the matrices get a guaranteed
throughput of at least $1/3$, the performance bound can only
guarantee this for 65\% of the matrices.

\chapter{Approximation and bounds of the global congestion
CDF}\label{Sec7:Approximation-and-bounds}

So far, we have mainly dealt with \emph{edge} congestions. We will
now deal with \emph{global} congestions. Of course, succeeding to
well approximate the \emph{global} T-Plots would mean obtaining a
performance model for the whole network. We will first provide a
simple approximation assuming independence, and then an upper-bound.

First assuming that all edge congestions are independent, i.e.
traffic matrices cause congestion at different links in an
independent manner, provides the following approximation:

\begin{equation}
GC_{CDF}(f,L)\approx\prod_{e\in E}EC_{CDF}(e,f,L)\label{eq:MEC_CDF
eq prod (single edges CDF)}\end{equation}

This is unfortunately quite often a poor approximation, and rather
plays the role of intuitive lower-bound, since matrices often cause
loads in a clearly dependent way. We thus look for an upper bound. A
trivial upper bound is obtained by the CDF of the most loaded edge
in the network:
\begin{equation}
GC_{CDF}(f,L)\leq\min_{e\in E}\{ EC_{CDF}(e,f,L)\}.\label{eq:GC leq
single_Edge_CDF}\end{equation} Further, if $e_{1}$ and $e_{2}$ are
two edges (e.g. the two most loaded edges in the network, which
could be the two different directions of the same link), then:
%the probability that the global
%congestion (the maximum on all edge congestions) is at least $x$ is
%lower-bounded by the probability that the maximum on those two edge
%congestions
\begin{eqnarray}
Pr\{ GC>x\}  & \geq & Pr\{ EC(e_{1})>x\textrm{ }\vee\textrm{ }EC(e_{2})>x\}\nonumber \\
&=       & Pr\{ EC(e_{1})>x\}+Pr\{ EC(e_{2})>x\}  \nonumber \\
&-       & Pr\{ EC(e_{1})>\textrm{ }x\textrm{ }\wedge EC(e_{2})>x\}  \nonumber\\
&\geq    &Pr\{ EC(e_{1})>x)+Pr\{ EC(e_{2})>x\} \nonumber \\
&-       & Pr\{ EC(e_{1})+EC(e_{2})>2x\},\label{eq:e1+e2 - e1_2}
\end{eqnarray}
where $Pr\{ EC(e_{1})+EC(e_{2})>2x\}$ is equal to
$1-EC_{CDF}(e_{1\_2},2x)$, using a dummy edge $e_{1\_2}$ for which
$f(e_{1\_2})=f(e_{1})+f(e_{2})$. Using $e_{1\_2}$, a similar upper
bound is obtainable as follows:
\begin{eqnarray}
Pr\{ GC \leq  x\}  & \leq & Pr\{ EC(e_{1})\leq x)\textrm{ }\wedge\textrm{ }EC(e_{2})\leq x)\}\nonumber \\
&\leq       &
EC_{CDF}(e_{1\_2},2x)\label{eq:upper_bnd}\end{eqnarray}
%\begin{eqnarray}
%GC_{CDF}(x) & = & Pr\{ GC(f)\leq x\}\nonumber \\
% & \leq & Pr\{ EC(e_{1},f)\leq x)\textrm{ }\wedge\textrm{ }EC(e_{2},f)\leq x)\}\nonumber \\
% & \leq & Pr\{ EC(e_{1},f)+EC(e_{2},f)\leq2x)\}\nonumber \\
% & = & EC_{CDF}(e_{1\_2},f,2x)\label{eq:upper_bnd}\end{eqnarray}
A stricter upper bound may then be defined as the minimum of the
three bounds (\ref{eq:GC leq single_Edge_CDF}), (\ref{eq:e1+e2 -
e1_2}) and (\ref{eq:upper_bnd}).

\chapter{Capacity allocation scheme}\label{Sec8:Improving-the-capacity}

Let's denote by $C_i$ the capacity of edge $i$, and by $\mu_i$,
$\sigma_i$ the mean and standard deviation of the load on edge $i$,
respectively. We now suggest to allocates to edge $i$ a capacity of
$c(i) = \mu_i + k\sigma_i$, where we use the same value of \emph{k}
for all edges. The total capacity required, as a function of
\emph{k}, is:
\begin{equation}
\sum^{|E|}_{i=1}c(i)= \sum^{|E|}_{i=1}\mu_i +
k\sum^{|E|}_{i=1}\sigma_i \label{eq:Cap allocation - sec 5}
\end{equation}
Note that when the total capacity is constrained to be smaller than
the sum of the average-case edge congestions, \emph{k} is negative.

We will now show that this capacity allocation minimizes the
probability that the network is saturated when using two
approximation assumptions: first, the loads on different edges are
independent; and second, the edge T-Plots obey a Gaussian model with
the same standard-deviation.

We remind that a network is saturated if at least one edge in it is
saturated; an edge $e$ is saturated if the edge congestion on it is
at least 1, i.e. the flow crossing $e$ is not below its capacity.

\begin{mytheorem} \label{thm: Thm 3} Assume that the T-Plots of all edges $i$ are independent and Gaussian of mean $\mu_i$ and same standard-deviation $\sigma$. Then allocating to each edge $i$ a
capacity $C_i = \mu_i + k\sigma$, where $k$ is a real constant,
minimizes the probability that the network is saturated.
\end{mytheorem}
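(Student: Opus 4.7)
The plan is to set up the minimization of the saturation probability as a constrained optimization problem in the capacities $c(i)$, and then show via Lagrange multipliers that every optimum has the same normalized deviation $(c(i)-\mu_i)/\sigma$ on every edge. The fixed-budget constraint $\sum_i c(i) = T$ is what makes the problem nontrivial (cf. Eq.~(\ref{eq:Cap allocation - sec 5}) relating $T$ to $k$); without it the trivial answer would be $c(i)=\infty$.

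First, I would rewrite the objective. By independence, the probability that no edge is saturated is $\prod_{i=1}^{|E|}\Pr(X_i < c(i))$ where $X_i \sim \mathcal{N}(\mu_i,\sigma^2)$, so
\begin{equation}
\Pr(\text{network saturated}) \;=\; 1 - \prod_{i=1}^{|E|} \Phi\!\left(\frac{c(i)-\mu_i}{\sigma}\right),
\end{equation}
where $\Phi$ is the standard normal CDF. Introducing $y_i := (c(i)-\mu_i)/\sigma$, the constraint $\sum_i c(i) = T$ becomes $\sum_i y_i = S$ with $S := (T - \sum_i \mu_i)/\sigma$ a constant, and the problem reduces to maximizing $\sum_i \log \Phi(y_i)$ over $\sum_i y_i = S$.

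Next, I would apply Lagrange multipliers. The stationarity condition reads
\begin{equation}
\frac{\phi(y_i)}{\Phi(y_i)} \;=\; \lambda \qquad \text{for every edge } i,
\end{equation}
where $\phi$ is the standard normal PDF. The ratio $\phi(y)/\Phi(y)$ is the (reversed) hazard rate of the standard normal, which is strictly monotone in $y$. Hence the equation $\phi(y)/\Phi(y)=\lambda$ has a unique solution, call it $k$, and so $y_i = k$ for every $i$, i.e.\ $c(i) = \mu_i + k\sigma$. The value of $k$ is then fixed by the budget via $k = S/|E|$, matching the scheme in Eq.~(\ref{eq:Cap allocation - sec 5}).

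Finally, to confirm that this critical point is indeed the global minimum of the saturation probability (and not a saddle or maximum), I would invoke the classical fact that $\log \Phi$ is strictly concave on $\mathbb{R}$ (log-concavity of the Gaussian CDF). Then $\sum_i \log \Phi(y_i)$ is a sum of strictly concave functions, hence strictly concave on the affine constraint set $\{\sum_i y_i = S\}$, so its unique stationary point is the global maximum, which in turn minimizes the saturation probability. The main obstacle, and the only nontrivial analytic ingredient, is the strict log-concavity of $\Phi$ (equivalently, the strict monotonicity of $\phi/\Phi$); this is a standard result which I would either cite or prove in a short lemma by differentiating $(\log \Phi)''$ and using that $\phi'(y) = -y\phi(y)$.
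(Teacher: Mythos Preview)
Your proposal is correct and follows essentially the same approach as the paper: set up the constrained maximization of $\prod_i \Phi((C_i-\mu_i)/\sigma)$ under a total-capacity budget, apply Lagrange multipliers, and read off the stationarity condition $\phi(y_i)/\Phi(y_i)=\lambda$ to conclude $y_i\equiv k$. Your version is in fact slightly more complete than the paper's, which stops at the stationarity condition and simply asserts optimality; you add the strict monotonicity of $\phi/\Phi$ (uniqueness of the critical point) and the log-concavity of $\Phi$ (global optimality), both of which the paper leaves implicit.
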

\begin{proof}
We denote the total available capacity by $C$ , i.e., a legal
capacity allocation satisfies $\sum_{i}C_i \leq C$. We would like to
maximize the probability that the global congestion is below 1.
Formally, we want to find
\begin{equation*}
\argmax_{\{C_1, \cdots, C_{|E|} \} }   Pr\{GC(f,C_1,\cdots,,C_{|E|})
< 1\}
\end{equation*}
\begin{equation*}
\mathrm{s.t. } \sum^{|E|}_{i=1}{C_i} =C
\end{equation*}
Using the independence assumption, we want to maximize the product
of the probabilities of edge congestions on all edges:
\begin{equation}
Pr\{GC(f,C_1,\cdots,C_{|E|}) < 1\} = \prod^{|E|}_{i=1} {EC_{CDF}(e_i
(C_i), f, 1)}  \label{eq:product}
\end{equation}
Let $H(x)$ be the standard Gaussian CDF, and $h(x)$ the PDF (its
derivative). Then,
\begin{equation}
EC_{CDF}(e_i (C_i),f, 1))=H((C_i-\mu_i)/\sigma).
\end{equation}
Using Lagrange multipliers and differentiating
Equation~(\ref{eq:product}), we find that the objective function is
maximized when for each $i \neq j,$
\begin{equation}
\frac{EC_{PDF}(e_i (C_i),f, 1))}{EC_{CDF}(e_i (C_i),f, 1))} =
\frac{EC_{PDF}(e_j (C_j),f, 1))}{EC_{CDF}(e_j (C_j),f,
1))},\end{equation} i.e.
\begin{equation}
\frac{h((C_i-\mu_i)/\sigma)}{H((C_i-\mu_i)/\sigma)} =
\frac{h((C_j-\mu_j)/\sigma)}{H((C_j-\mu_j)/\sigma)}.
 \label{Eq:CDF_div_PDF}
\end{equation}
Equation (\ref{Eq:CDF_div_PDF}) is solved when $C_i = \mu_i
+k\sigma$ for each $i$, where $k$ is {\em the same} constant for
each $i$, because all the PDF probabilities are then equalized (the
standard-deviations are equal) as well as the CDF probabilities.
Therefore, this solution maximizes the objective function. Note that
for different standard deviations, this Lagrangian-based method can
also directly bring the optimal, yet less elegant, capacity
allocation solution, by solving Equation~(\ref{Eq:CDF_div_PDF})
using methods of numerical analysis.
\end{proof}

\chapter{Simulations}\label{Sec9:Simulations}

\begin{figure}
    \centering
    \includegraphics
    % [bb=0 920 1200 50,width=6 cm, height=5cm]
    [width = 0.66\columnwidth]
    {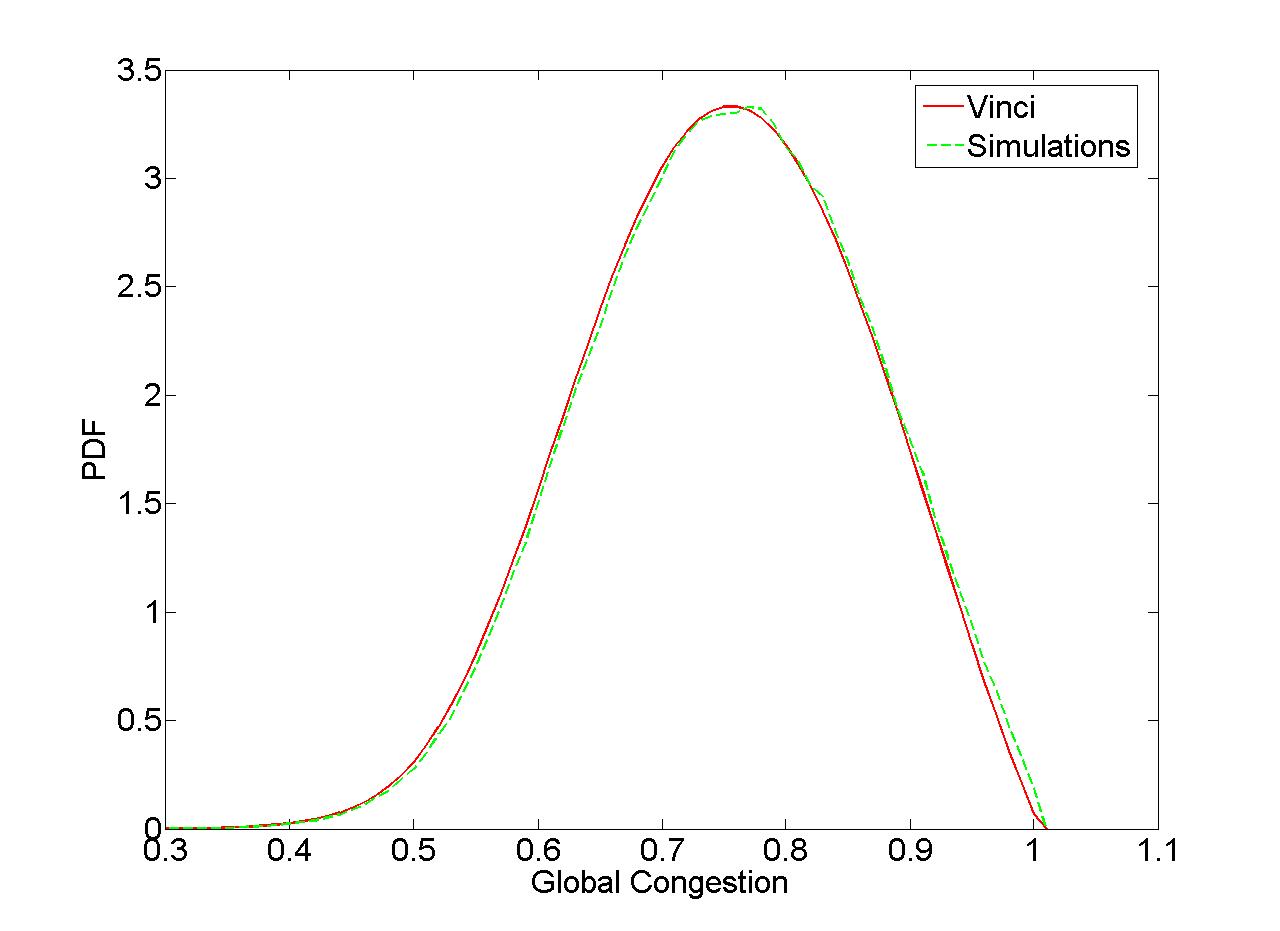}
    \caption{PDF of the global congestion in the $2 \times 2$ cube (when
    the T-Set is $\setA$)} \label{fig:Vinci}
\end{figure}

\section{T-Set representation}\label{sub:T-Set-representation}
We need the ability to pick a traffic matrix u.a.r. from the T-Set
for two reasons. First, to obtain a close approximation of
the average and the variance of the edge congestion (as explained in
Section ~\ref{AppendixC}). Second, to compare the models
developed in the previous chapters to simulation results. While it is
easy to generate a random permutation, picking a traffic matrix
u.a.r. from $\setS$ or $\setA$ is harder. To do so, we used
random-walk sampling. After starting from some matrix in the T-Set
(e.g., the null matrix), we computed a new matrix each time. To do
so, at each step, we compute the matrix change $\Delta$, and add it
to the current matrix $D$, so that the new matrix is $D'=D+\Delta$.
If $D'$ is in the T-Set, we move to it, and reject it otherwise.
When the T-Set was $\setA$, we performed a Gaussian version of the
ball walk \cite{key-Random_walk}, i.e. the next point in the walk is
normally distributed around the current point such that
$\Delta[i,j]\sim N(0,1/2n)$ for each $i,j$ (of course, other
random-walk approaches can be used if the guaranteed rate of
convergence is important to the user, as explained further in
~\cite{key-Random_walk}). When the T-Set was $\setS$, we picked at
each step $4$ integers $(i_{1},j_{1},i_{2},j_{2})$ u.a.r. from
$[1,n]$, generated a normally distributed step $\delta\sim
N(0,1/2n)$, and assigned
$\Delta[i_{1},j{}_{1}]=\Delta[i_{2},j{}_{2}]=-\delta$ and
$\Delta[i_{1},j{}_{2}]=\Delta[i_{2},j{}_{1}]=\delta$. The rest of
the entries in $\Delta$ are zeros. Note that this generation process
requires checking at each step whether the next matrix is indeed in
the required T-Sets. In any case, speed was not a problem: for
instance, it takes a few minutes to generate a million random
samples in the Abilene case ($n=11$) using an unoptimized Matlab
script. However, we needed to check that the sample distribution
indeed converges, in the expected speed, to the uniform distribution
on the T-Set. To do so, we used three different tests.

First, we used the fact that computing the T-Plots when the T-set is
$\setA$ may also be expressed as a volume computation
problem~\cite{key-Typical_vs_WC}. For each load value $L$, the
region
$$\setR=\{D\in \setA | GC(f,D) < L\}$$ forms a convex polytope. The
ratio between the volume of $\setR$ and the volume of the polytope
$\setA$ is equal to $GC_{CDF}^{T}(f,L)$. (This method can be
extended to $\setS$, as it is possible to compute the volume of
$\setS$ with $n$ nodes by computing the volume of $\setA$ with $n-1$
nodes \cite{key-vol_of_poly_of_SD}.) Therefore, we could compare the
results of our Monte Carlo simulations using 1 million
u.a.r.-generated matrices to the exact results obtained by Vinci
\cite{key-Vinci}, a standard convex-polytope volume computation
tool. As noted before, computation of the volume of polytopes is
\emph{\#P-C} \cite{key-Vol_is_P-C}, and therefore we used a small
sample network: the $2 \times 2$ cube with $n=4$ and DOR
routing~\cite{key-DOR}. Figure \ref{fig:Vinci} is a T-Plot of the
global congestion at this network.  The graph shows that the
simulation results are extremely close to Vinci's exact results.

The comparison with Vinci strongly validates the approach with a small
$n$. However, as $n$ becomes larger, there is no way to verify the results. We checked two other, much weaker
properties. First, we checked that the resulting T-Plot is
independent of the initial matrix. Second, after $m$ experiments of
u.a.r. generated points, we verified that the variance did indeed
decrease as $O(1/m)$ and obey the formula $var[X]=\frac{p(1-p)}{m}$,
where $X$ is the indicator r.v. representing the fact that the
sample is in some bin and $p=EX$.\\

\begin{figure}
    \centering
    \includegraphics
    % [bb=40 230 570 530,width=9cm, height=6cm]
    [width = 0.66\columnwidth]
    {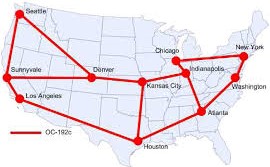}
     \caption{Abilene backbone
    network} \label{fig:Abi_dist_map}
\end{figure}

\section{Experimental setup}\label{experimental_setup}

%% Abi topo: STTL  = 1; SNVA = 2; DNVR = 3; LOAS = 4; KSCY = 5; HSTN = 6; CHIN = 7; IPLS = 8; NYCM = 9; WASH = 10; ATLA = 11;
% e1: (1,2)

We use Abilene as a sample backbone network \cite{key-Abi}, and
provide a simple model of its characteristics. We assume a shortest
path routing according to the metric shown in Figure \ref{fig:Abi_dist_map}.

All link capacities are 10Gbps \cite{key-Abi_is_10Gbps}. We further
assume that each node can send and receive up to 10Gbps (homogeneous
case). Obviously, we may normalize it by using units of 10Gbps; that
is, assume that nodes have a unit injection and ejection bandwidth,
and that the edge capacities are all equal to 1. We will later
revisit this assumption, which is rather pessimistic in the sense
that the aggregated rate received or initiated by each of Abilene's
nodes is usually far below 10Gbps \cite{key-Abi-traffic}. In this
Thesis, we focus on the most loaded edge, called $e_{13}$, which
corresponds to the (Kansas City, Indianapolis) directed edge; and on
the least loaded edge, called $e_1$, which corresponds to the
(Seattle, Sunnyvale) directed edge.

Abilene contains 11 nodes, so the number of permutations is $11!$,
and an exhaustive examination is feasible. Thus, the T-Plots for the
set of permutations, $\setP$, and for the set of derangements,
$\setP_{d}$, include the exact results. However, when the network
contains more nodes, an exhaustive examination is not feasible anymore,
and the common method is to use Monte Carlo simulations
\cite{key-WC_traffic,key-O1TURN,key-TP_centric,key-GOAL}. Indeed,
neither are exact results obtainable when the T-Set is continuous.
For such cases, we simulated 1 million u.a.r.-generated traffic
matrices, as explained above.

\section{Edge T-Plots}
Figures \ref{fig_e13_P_Pd} and \ref{fig_e13_DS_DSS} show the PDF of
the edge congestion on edge $e_{13}$. Figure
\ref{fig:Abi_perms_e13_PDF_again} is identical to Figure
~\ref{fig:Abi_perms_e13_PDF} which shows the PDF of the edge
congestion on edge $e_{13}$, where the T-Set is $\setP$, and is
given here again for convenience. Figure \ref{fig:Abi_Pd_e13_PDF}
shows the PDF of the edge congestion on edge $e_{13}$, where the
T-Set is $\setP_{d}$. The Gaussian approximation for this case is
again very accurate. A comparison between Figure
~\ref{fig:Abi_perms_e13_PDF_again} and Figure
\ref{fig:Abi_Pd_e13_PDF} shows that the probability of an edge to
become congested increases when switching from permutations to
derangements, since no efficient routing algorithm would waste
network resources to route a packet from a node to itself. This
shows how assuming that the traffic matrix belongs to the whole set
of permutations, which is often done in the literature
\cite{key-WC_traffic,key-O1TURN,key-TP_centric,key-GOAL}, is a bit
too optimistic. The theoretical computation shows that the ratio
between the mean load when the T-Set is $\setP_d$ and the mean load
when the T-Set is $\setP$ is $n/(n-1)$ (see Chapters
\ref{SubSec:avg-variance} and \ref{SubSec:avg-variance-more}).

\begin{figure}
    \centering 
    \subfigure[Congestion PDF on $e_{13}$ (over $\setP$)]{
    \includegraphics
    % [bb=0 920 1200 50,,width=7cm, height=5cm]
    [width = 0.66\columnwidth]
    {Abi_e13_PDF.jpg}
    \label{fig:Abi_perms_e13_PDF_again}}
    \subfigure[Congestion PDF on $e_{13}$ (over $\setP_{d}$)] {
    \includegraphics
    % [bb=0 920 1200 50,width=7cm, height=5cm]
    [width = 0.66\columnwidth]
    {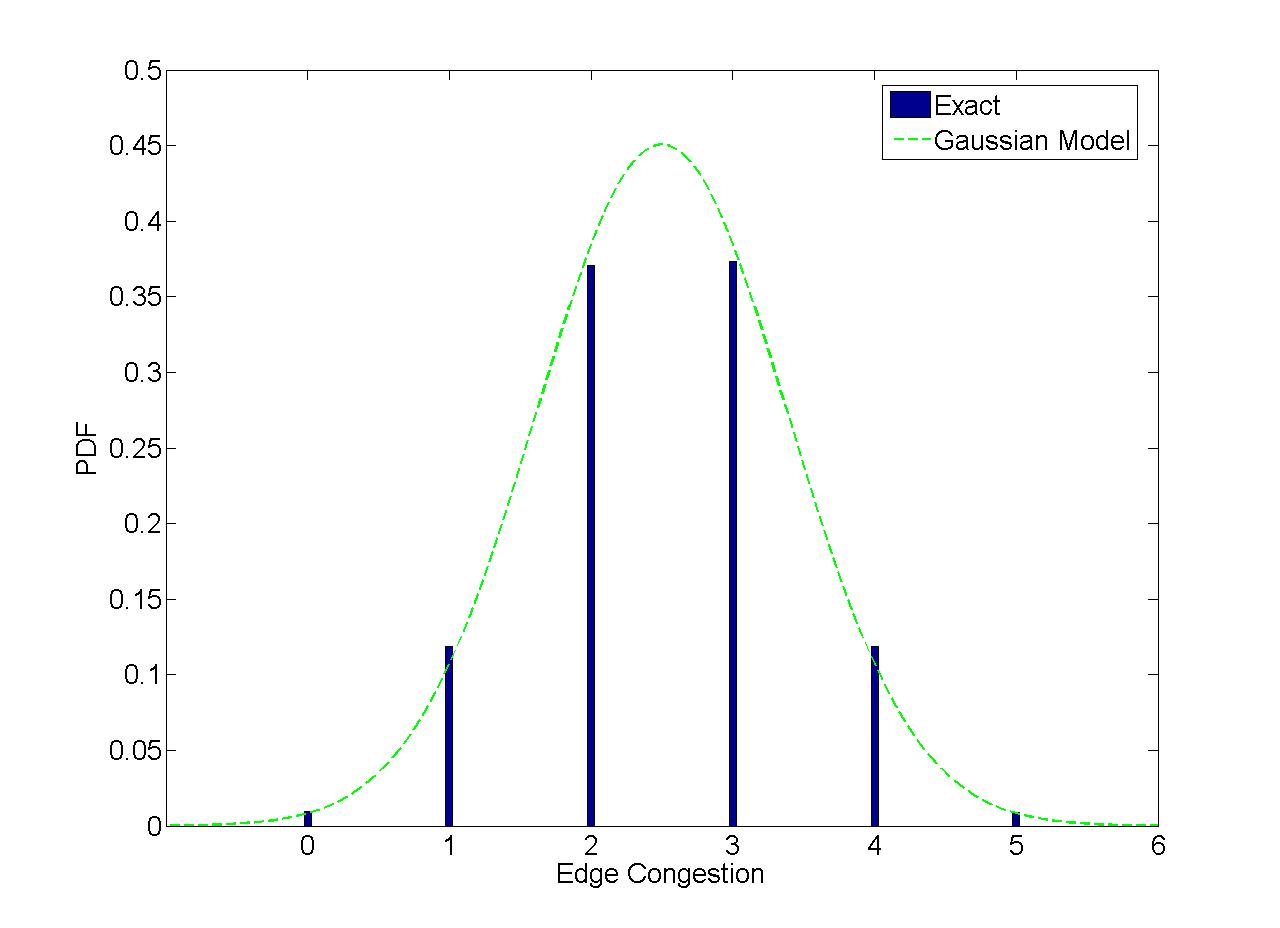}
    \label{fig:Abi_Pd_e13_PDF}} 
    \caption{Congestion PDF on $e_{13}$ over discrete T-Sets}
    \label{fig_e13_P_Pd}
\end{figure}
    
\begin{figure*}
    \centering
    \subfigure[Congestion PDF on $e_{13}$ (over $\setS$)]
    {
    \includegraphics
    % [bb=0 920 1200 50,width=7cm, height=5cm]
    [width = 0.66\columnwidth]
    {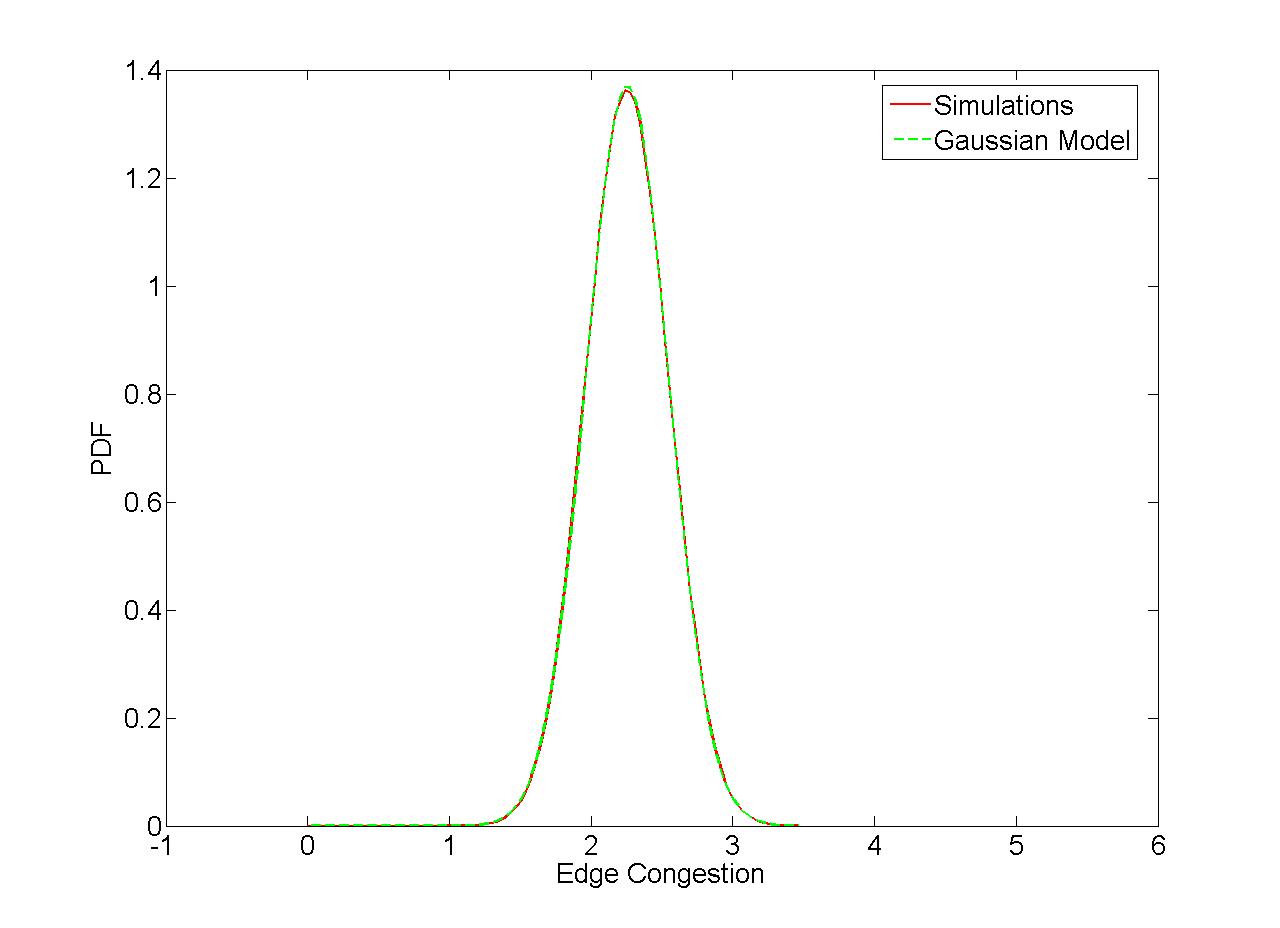}
    \label{fig:Abi_DS_e13_PDF}}
    \subfigure[Congestion PDF on $e_{13}$ (over $\setA$)]
    {
    \includegraphics
    % [bb=0 920 1200 50,width=7cm, height=5cm]
    [width = 0.66\columnwidth]
    {Abi_vol_e13_PDF.jpg}
    \label{fig:Abi_vol_e13_PDF_X}}  
    \caption {Congestion PDF on $e_{13}$ over continuous T-Sets}
    \label{fig_e13_DS_DSS}
\end{figure*}

Figure ~\ref{fig:Abi_DS_e13_PDF} shows the PDF of the edge
congestion on edge $e_{13}$, where the T-Set is $\setS$. The
Gaussian model fares well again, as in the other T-Plots involving
$e_{13}$. Figure \ref{fig:Abi_vol_e13_PDF_X} is identical to Figure
~\ref{fig:Abi_DSS_e13_PDF} which shows the PDF of the edge
congestion on edge $e_{13}$ where the T-Set is $\setA$, and is given
here again for convenience. A comparison between
Figure~\ref{fig:Abi_perms_e13_PDF_again} and Figure
\ref{fig:Abi_DS_e13_PDF} reveals that when switching from $\setP$ to
$\setS$, the variance of the congestion decreases, which is
intuitively explained by the fact that the permutations represent
the edge cases, which have a larger tendency to be either extremely
overloaded or extremely underloaded. Note that a further decrease in
the variance (and also in the average) of the congestion happens
when switching from $\setS$ (Fig. \ref{fig:Abi_DS_e13_PDF}) to
$\setA$ (Fig. \ref{fig:Abi_vol_e13_PDF_X}). This decrease happens because $\setA$ includes also traffic matrices whose total sum is
smaller than \emph{n}, thus typically representing lower traffic
demands, and smaller probability for edge cases.

Figures \ref{fig_e1_P_Pd} and \ref{fig_e1_DS_DSS} are similar to
Figures \ref{fig_e13_P_Pd} and \ref{fig_e13_DS_DSS} respectively,
but now the distributions are plotted for edge $e_1$. Figure
\ref{fig:Abi_vol_e1_PDF_again} is identical to Figure
~\ref{fig:Abi_vol_e1_PDF} which shows the PDF of the edge congestion
on edge $e_{1}$ where the T-Set is $\setA$, and is given here again
for convenience. It can be seen again that there is a continuous
decrease in the mean edge congestion when switching from $\setP_d$
to $\setP$, further to $\setS$ and finally to $\setA$. Note that for
none of the plots does the Gaussian Model fare well - see the
discussion in Chapter \ref{Sec5:gaussian}.

\begin{figure}
    \centering
    \subfigure[Congestion PDF on $e_{1}$ (over $\setP$)]
    {
    \includegraphics
    % [bb=0 920 1200 50,width=7cm, height=5cm]
    [width = 0.66\columnwidth]
    {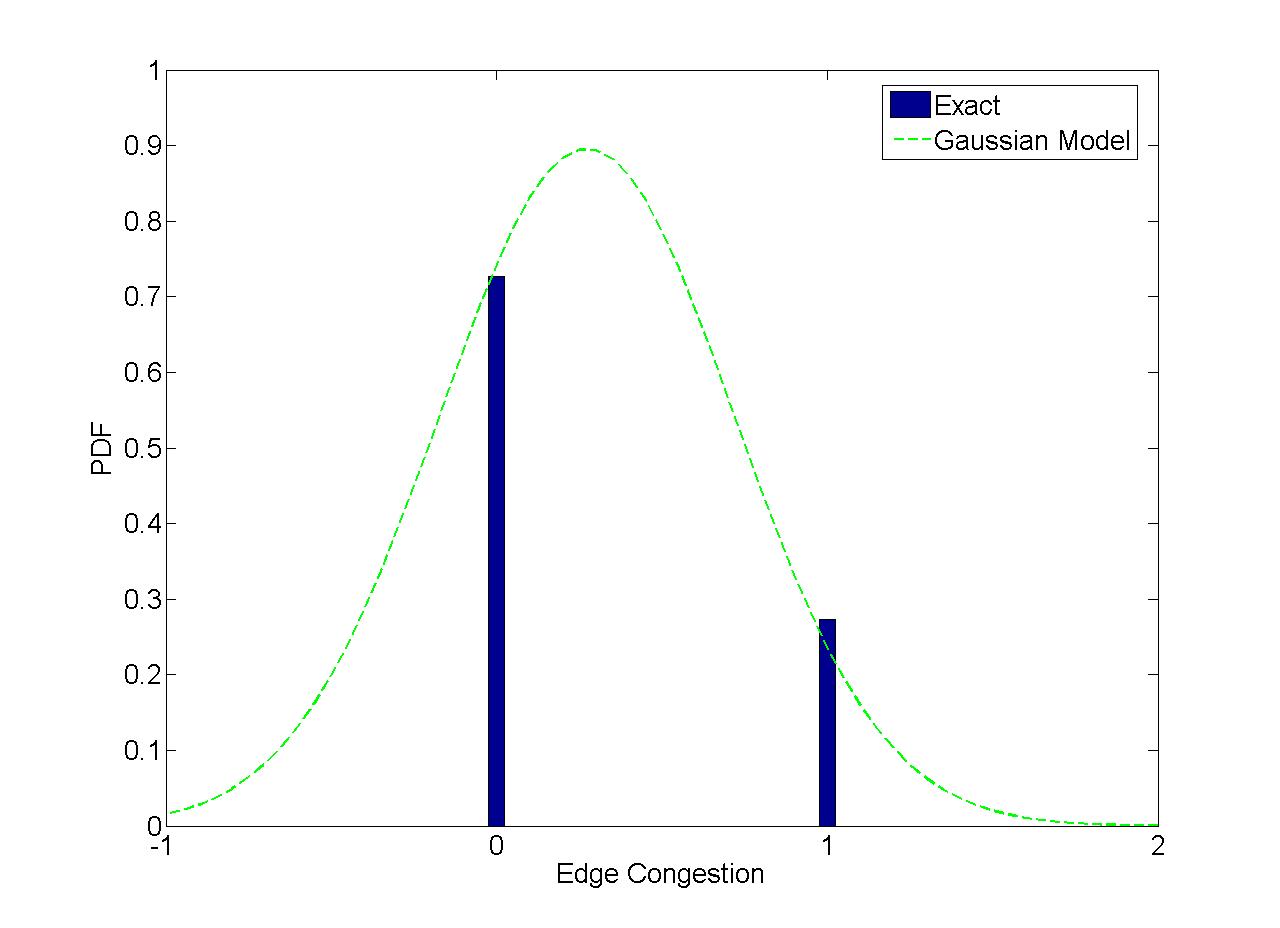}
    \label{fig:Abi_perms_e1_PDF}}
    \hfill \subfigure[Congestion PDF on $e_{1}$ (over $\setP_{d}$)]
    {
    \includegraphics
    % [bb=0 920 1200 50,width=7cm, height=5cm]
    [width = 0.66\columnwidth]
    {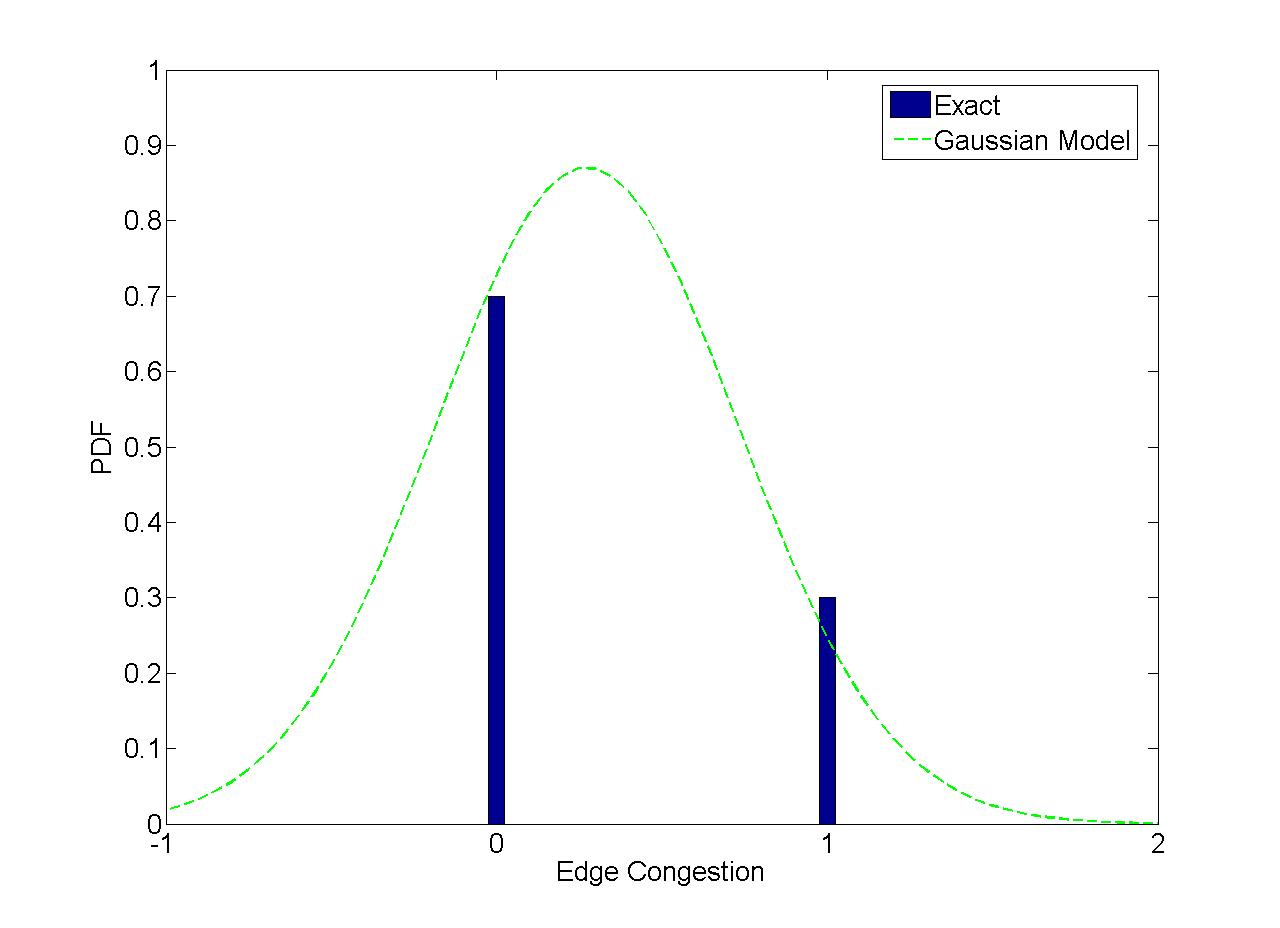}
    \label{fig:Abi_Pd_e1_PDF}} 
    \caption{Congestion PDF on $e_{1}$ over discrete T-Sets}
    \label{fig_e1_P_Pd}
\end{figure}

\begin{figure}
    \centering
    \subfigure[Congestion PDF on $e_{1}$ (over $\setS$)]{
    \includegraphics
    % [bb=0 920 1200 50,width=7cm, height=5cm]
    [width = 0.76\columnwidth]
    {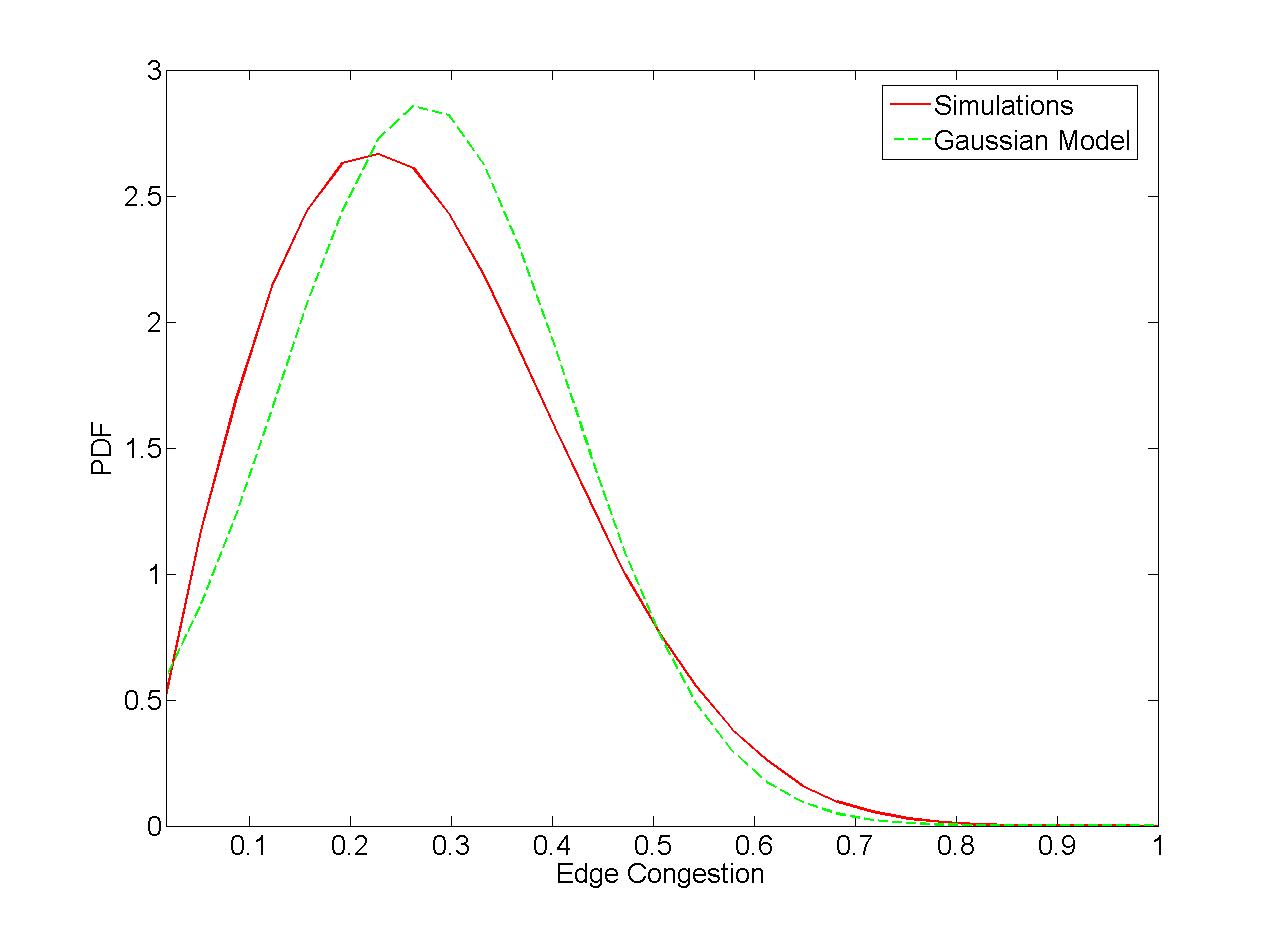}
    \label{fig:Abi_DS_e1_PDF}}
    \vspace{3 cm}
    \subfigure[Congestion PDF on $e_{1}$ (over $\setA$)]{
    \includegraphics
    % [bb=0 920 1200 50,width=7cm, height=5cm]
    [width = 0.76\columnwidth]
    {Abi_vol_e1_PDF.jpg}
    \label{fig:Abi_vol_e1_PDF_again}}  
    \caption {Congestion PDF on $e_{1}$ over continuous T-Sets}
    \label{fig_e1_DS_DSS}
\end{figure}

\section{Global T-Plots}
Figure~\ref{fig:Abi_vol_MEC_PDF} shows the PDF of the global
congestion in Abilene. It can be seen that it is well fitted by a
Gaussian distribution. Note that here, contrarily to all other
places, we only fitted the Gaussian distribution without using
\emph{a-priori} models. In other words, the mean and the variance of
the distribution, which are necessary for plotting the Gaussian, are
those found using simulations, and not by theoretical models -
as we do \emph{not} have such for \emph{global} T-Plots. Note that
the maximum of $n$ i.i.d. Gaussian random variables does {\em not}
behave as a Gaussian random variable (it follows a Gumbel
distribution~\cite{Gumbel}), and thus one must be careful with the
conclusions taken from this plot. It is reasonable to assume that
most of the Gaussian behavior comes from the few most loaded links,
like $e_{13}$.

\begin{figure*}
\centering{
{
\includegraphics
% [bb=0 920 1200 50,width=8cm, height=6cm]
[width = 0.76\columnwidth]
{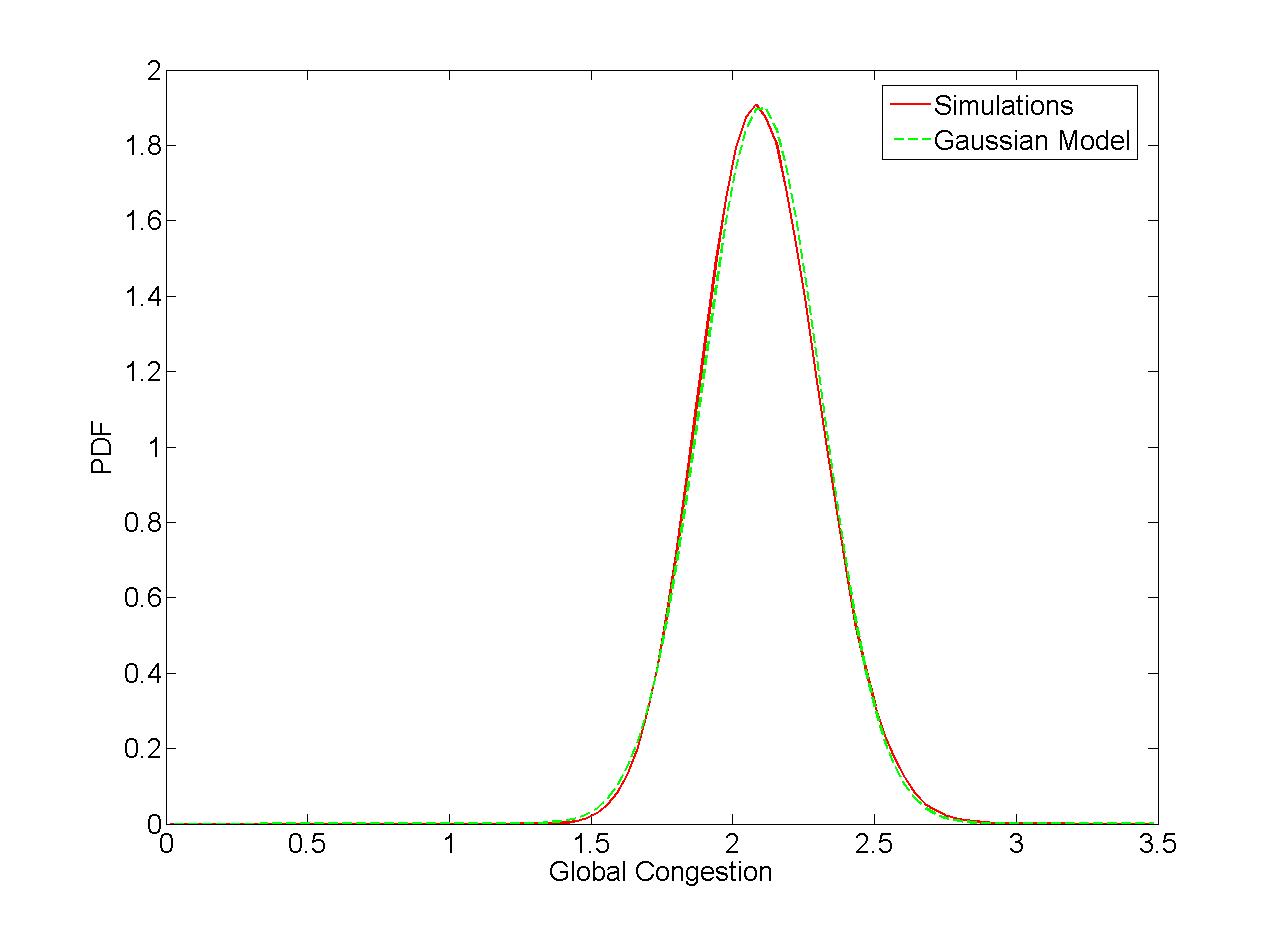}
\caption{Global congestion PDF: over $\setA$}
\label{fig:Abi_vol_MEC_PDF}} }
\end{figure*}

Figure \ref{fig:Abi_vol_MEC_CDF} shows the CDF of the global
congestion in Abilene. We calculate the approximation and the upper bounds using the models developed in Chapter
~\ref{Sec7:Approximation-and-bounds}. Note that switching from the
Gaussian approximation to the real values to plot the upper bound
result in nearly undistinguishable results. The figure shows that
both the approximation and the upper bound are rather close to the
exact results. The T-Plot directly translates into a required
capacity overprovisioning: for instance, since
$GC_{CDF}^{T}(f,1.5)=0$, it is necessary to use at least 50\%
overprovisioning to guarantee that at least some traffic matrices
can be services (we remind that we used a pessimistic assumption
regarding traffic demands). Likewise, $GC_{CDF}^{T}(f,2.7)=1$, and
so a capacity of 2.7 would guarantee 100\% throughput. We will see
below that it is possible to guarantee 100\% using a much lower
overprovisioning, by deploying an optimized capacity allocation.

\begin{figure}
\centering
\subfigure[Global congestion CDF (over $\setA$)]{
\includegraphics
% [bb=0 920 1200 50,width=7cm, height=5cm]
[width = 0.66\columnwidth]
{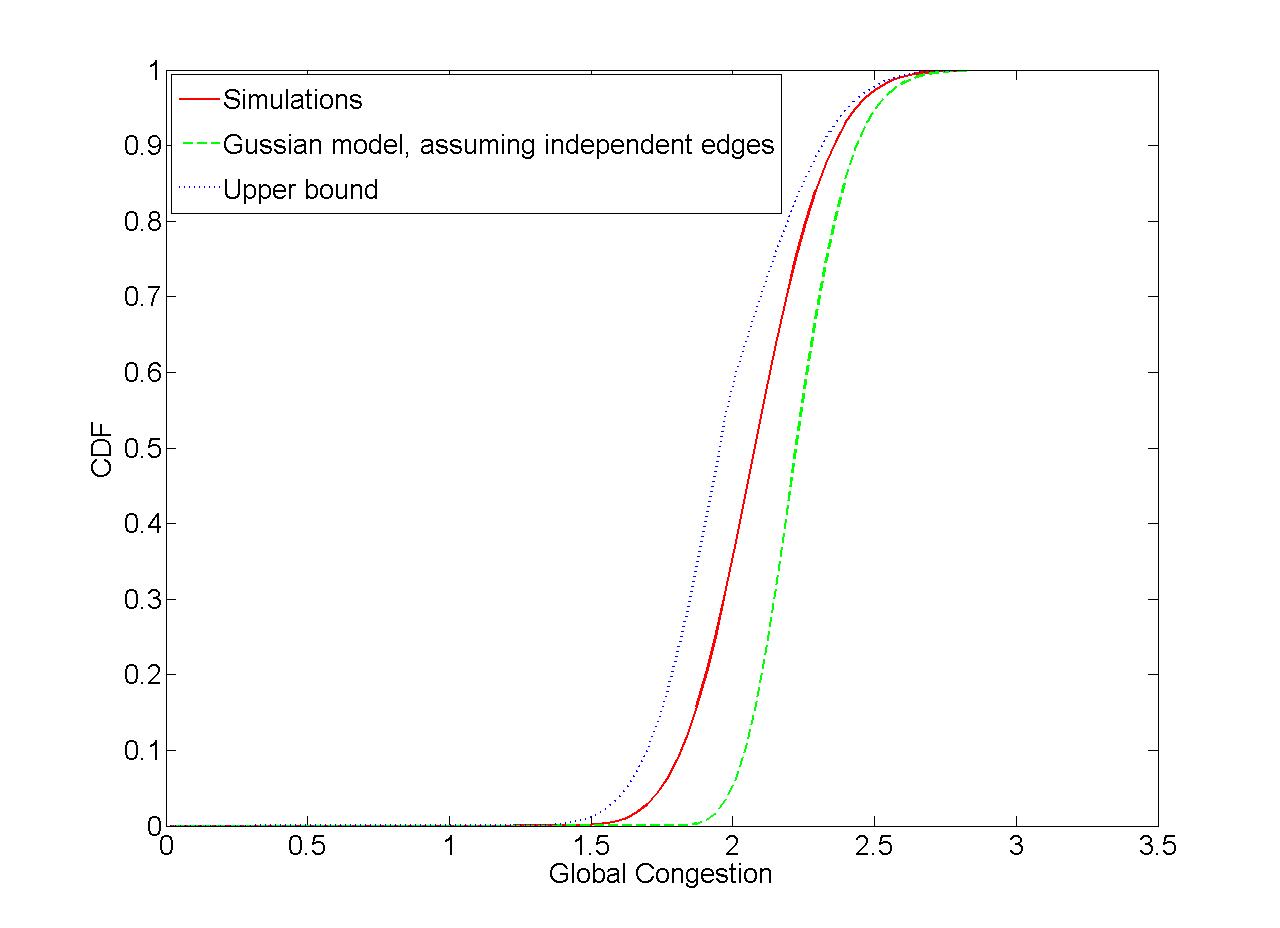}
\label{fig:Abi_vol_MEC_CDF}}

\subfigure[Global congestion CDF, for various Capacity Allocations (over $\setA$)] {
\includegraphics
% [bb=0 920 1200 50,width=7cm, height=5cm]
[width = 0.66\columnwidth]
{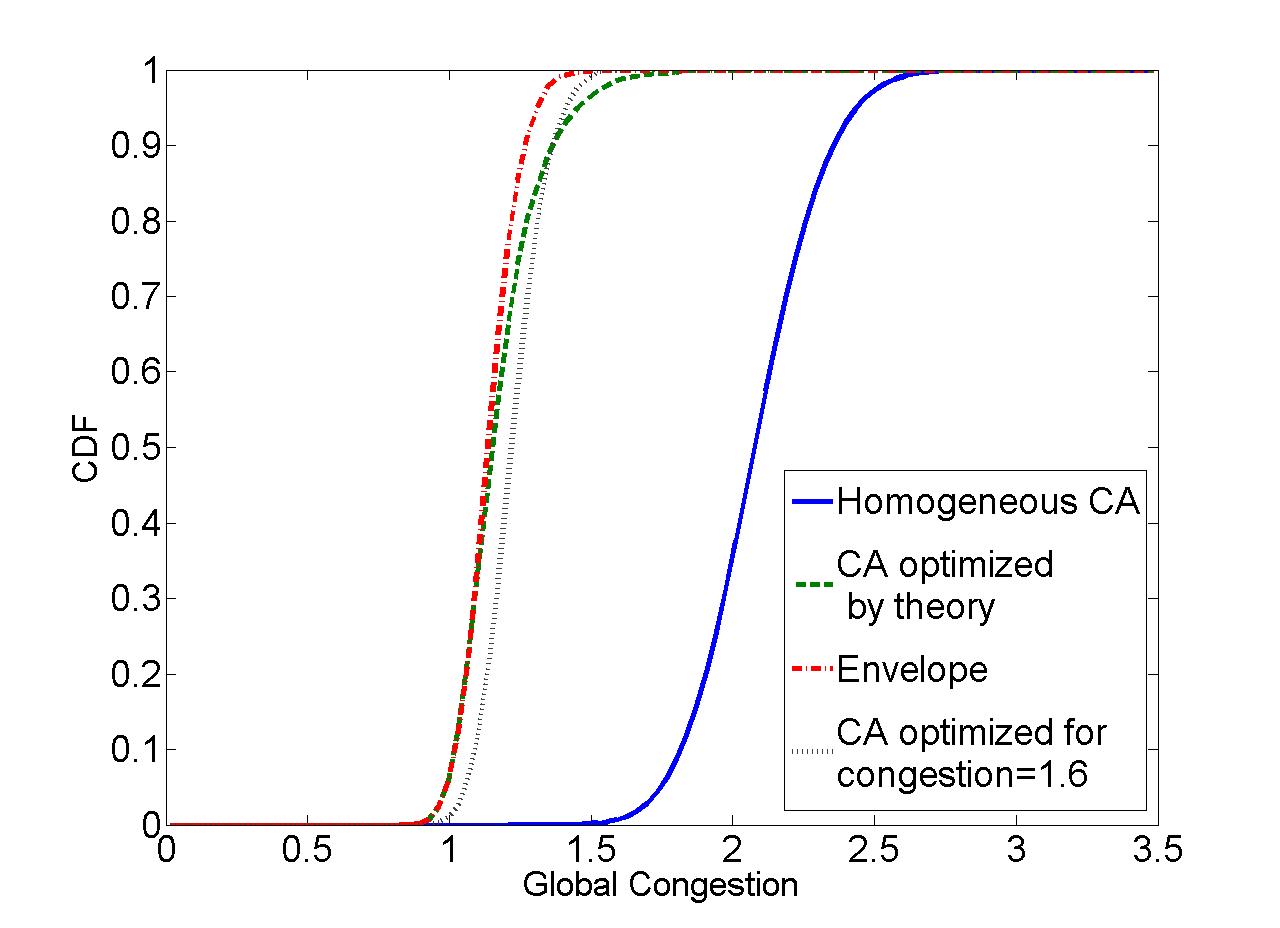}
\label{fig:Abi_DSS_CA_CDF}} 
\hfil \caption{Global congestion CDF: bounds and improvements}
\label{fig_sim_global}
\end{figure}

\section{Improved capacity allocation}

Until now, we realized every T-Plot "as-is", without doing any
optimization: we simply measured the distribution of the link load
distribution. We will now show that T-Plots can do more than just
\emph{measure}: they can also help \emph{optimize}.

Until now, we have used an homogeneous capacity allocation of the
Abilene backbone network. We assumed that the homogeneous Abilene
network contains 28 edges, each of unit capacity - i.e. total
capacity is 28 units.

We will now assume that capacity can be distributed differently
among the edges, so that the total allocation cost is simply equal
to the sum of the capacities. Using the simple capacity allocation
scheme presented in the Chapter~\ref{Sec8:Improving-the-capacity},
we force $\sum_{e\in E}c(e)=28$ in Equation (\ref{eq:Cap allocation
- sec 5}), and get $k=1.14$.

To examine the quality of our capacity allocation scheme, we compare
it to an optimized capacity allocation scheme, which was obtained
after extensive simulations. Given an optimization criterion, the
simulations include 10,000 iterations. At each iteration, a new
capacity allocation is taken at the neighborhood of the old one
using the Gaussian ball-walk distribution mentioned above (Chapter
\ref{sub:T-Set-representation}), and is only accepted if it fares
better. At each point, the throughput was calculated using 10,000
u.a.r.-generated $\setA$ matrices. (We also checked that starting
from different points yields the same result.)

But what is the optimization criterion? It turns out that optimizing
for different values in the global congestion CDF will
yield different solutions. Therefore, we performed different
optimizations on the whole spectrum of the global congestion CDF. For each value within this spectrum, we maximized the fraction of
matrices yielding a global congestion under this value. Thus, we got
a full envelope plot indicating an upper bound of the achievable
CDF.

This is better understood with Figure \ref{fig:Abi_DSS_CA_CDF},
which shows the CDF of the global congestion when using different
capacity allocation schemes. First, the homogeneous capacity
allocation performs poorly relative to the other capacity
allocations. Next, we ran the optimization process described above
for the single congestion value of $1.6$. While it obtains an
optimal allocation for that value, i.e., maximizes the fraction of
matrices with a global congestion under 1.6, it is clear that it
does not perform optimally elsewhere. Therefore, an allocation
scheme that is optimal at one point is not necessarily optimal
across the whole spectrum. For this reason, we also determine the
envelope plot of optimal values at each point of the spectrum (note
that each point of this envelope corresponds to a different scheme;
thus this envelope is an upper bound, rather than the performance of
a specific allocation scheme). Finally, we show the results of our
simple allocation. It clearly performs well, is close to the upper
bound, and on most points behaves better than the scheme that is
optimal at the congestion value of $1.6$. Further, it needs some
40\% less capacity than the homogeneous scheme to guarantee most
performance levels.

\section{Heterogeneous network}

We finally present a heterogenous Abilene model taking into account
the diverse capacity at each of its nodes. The links still run at
10Gbps links, but different nodes may initiate or receive traffic at
different maximum rates. We defined these maximum rates as equal to
the maximum rates over a one-day trace, which is publicly
available~\cite{key-Abi-traffic}. The T-Plot in
Figure~\ref{fig:Abi_H} shows that the global congestion is always
far below 1, i.e. Abilene can support all possible traffic demands
under this model. Note that the plot has lost the Gaussian behavior
found in the homogeneous case
(Figure \ref{fig:Abi_vol_MEC_PDF}).

As described in Section ~\ref{AppendixE}, the Gaussian model is
also applicable for heterogeneous networks. However, currently,
calculating the average and the variance of the edge congestion in
such networks requires computing $O(n^{4})$ different integrals, and
thus we don't present it here. More efficient Gaussian modeling of
heterogenous network is a subject for future research.

\begin{figure}
\centering
\includegraphics
% [bb=0 920 1200 50,width=7cm, height=5cm]
[width = 0.66\columnwidth]
{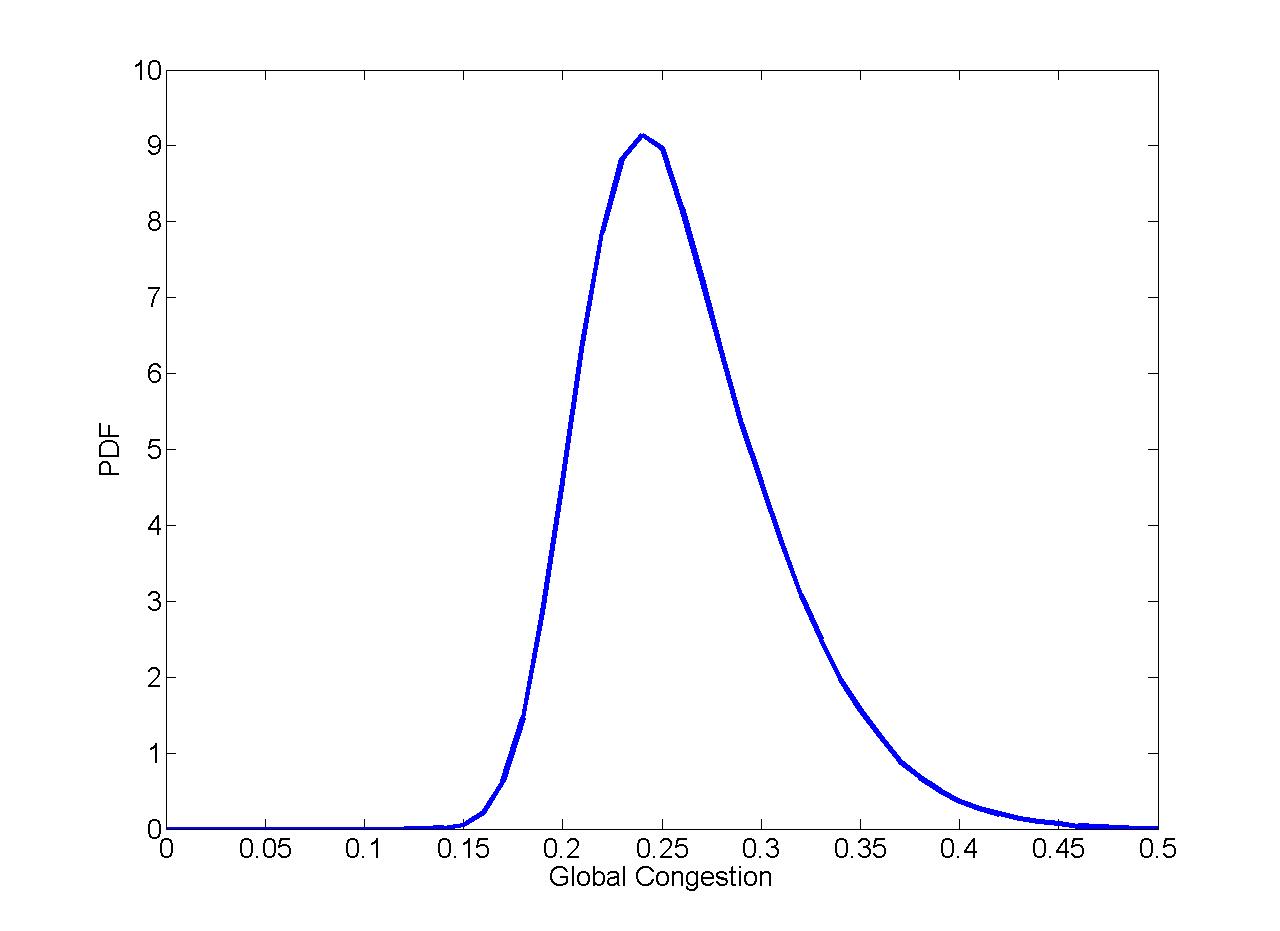}
\caption{PDF of the global congestion for the heterogeneous T-Set}
\label{fig:Abi_H}
\end{figure}

\chapter{Conclusion}

In this Thesis, we introduced the T-Plots, that can provide a
common foundation to quantify, design, optimize and compare traffic
engineering algorithms. We also showed that their computation is
\#P-Complete, but that they can sometimes be modeled as Gaussian,
providing a full link load distribution model using only two
variables. Further, we provided bounds that can be the basis of
strict throughput performance guarantees. We finally showed how
T-Plots can be used to develop a simple, yet efficient, capacity
allocation scheme. We believe and hope that this Thesis will
contribute to lay the ground to a common basis in future traffic
engineering research.

Further, even though we presented our work in the context of
backbone networks, it is clear that one can apply it to
interconnection networks (such as meshes, hypercubes or
bounded-degree graphs), and it would be interesting to study how it
relates to the work on oblivious routing in these topologies.

% that's all folks
%%%%%%%%@@@@@@@@@@@@@@@@@@@@@@@@@@@@@@@@@@@@@@@@@@@@@@@@@

% \bibliographystyle{plain}

{}

\end{document}